\def\BibTeX{{\rm B\kern-.05em{\sc i\kern-.025em b}\kern-.08em
    T\kern-.1667em\lower.7ex\hbox{E}\kern-.125emX}}
\newcommand{\norm}[1]{\left\lVert#1\right\rVert}
\newtheorem{theorem}{Theorem}[section]
\newtheorem{corollary}{Corollary}[theorem]
\newtheorem{lemma}[theorem]{Lemma}
\newcommand{\R}{\mathbb{R}}
\DeclareMathAlphabet{\mathcal}{OMS}{cmsy}{m}{n}
\theoremstyle{remark}
\newtheorem*{remark}{Remark}
\theoremstyle{definition}
\title{\LARGE \bf
On Stability Analysis of Power Grids with Synchronous Generators and Grid-Forming Converters under DC-side Current Limitation
}
\author{Sayan Samanta$^{1}$ and Nilanjan Ray Chaudhuri$^{1}$% <-this % stops a space
%\thanks{*This work was not supported by any organization}% <-this % stops a space
\thanks{$^{1}$Sayan Samanta and Nilanjan Ray Chaudhuri are with The School of Electrical Engineering \& Computer Science, The Pennsylvania State University, University Park, PA 16802, USA.
        {\tt\small sps6260@psu.edu, nuc88@psu.edu}}%

}
\begin{document}

\maketitle
\thispagestyle{empty}
\pagestyle{empty}

%%%%%%%%%%%%%%%%%%%%%%%%%%%%%%%%%%%%%%%%%%%%%%%%%%%%%%%%%%%%%%%%%%%%%%%%%%%%%%%%
\begin{abstract}
Stability of power grids with synchronous generators (SGs) and renewable generation interfaced with grid-forming converters (GFCs) under dc-side current limitation is studied. To that end, we first consider a simple 2-bus test system and reduced-order models to highlight the fundamental difference between two classes of GFC controls -- (A) droop, dispatchable virtual oscillator control (dVOC) and virtual synchronous machine (VSM), and (B) matching control. Next, we study Lyapunov stability and input-output stability of the dc voltage dynamics of class-A GFCs for the simple system and extend it to a generic system. Next, we provide a sufficiency condition for input-to-state stability of the 2-bus system with a class-B GFC and extend it for a generic system. Finally, time-domain simulations from a reduced-order averaged model of the simple test system and a detailed switched model of the GFC validate the proposed conditions.
\end{abstract}

%%%%%%%%%%%%%%%%%%%%%%%%%%%%%%%%%%%%%%%%%%%%%%%%%%%%%%%%%%%%%%%%%%%%%%%%%%%%%%%%
\section{Introduction}
Although a lot of intellectual capital has been invested towards research on prospective grids with 100\% converter-based generation -- it is of the authors' opinion that such systems may not become a reality as far as bulk power systems are concerned. Bulk power grids of the near and even distant future are expected to have SGs
in them, since hydro, solar thermal, and nuclear power are all here to stay. 
Indeed, many studies have been performed on the penetration of converter-based resources in presence of SGs, e.g.  \cite{nrel_1_2015,eirgrid_2012} and references therein, which in spite of their obvious merit, lack analytical insights that are fundamental to identifying major challenges in modeling and control of such systems and develop new theories in solving them.

It is only in the recent past that these gaps and challenges were summarized in a comprehensive manner by Milano et-al \cite{drofler_2018_foundation_and_challenges}. Among the multitude of fertile areas of research that can be pursued to solve these challenges, we focus on the dynamics, stability, and control of the real power channel in such systems that primarily affects the dc-link voltages of converter-based renewable generation and frequency of the ac system.
To that end, we consider the GFC technology and it's interaction with SGs in a bulk power grid, where two classes of GFC controls -- (1) droop, dVOC and VSM \cite{divan_droop_1993,zhong_VSM,drofler_dvoc} -- we call it class-A, and (2) matching control \cite{drofler_matching_control_2018} -- we term it class-B, are compared.

Our research is motivated by two relatively new papers on this topic \cite{drofler_cdc_2017,drofler_journal_2020}. In \cite{drofler_cdc_2017}, modeling adequacy of such systems is established through singular perturbation theory -- our paper follows similar modeling guidelines. However, the control law assumed for governor action in SGs in \cite{drofler_cdc_2017} is not quite realistic. A more realistic turbine-governor dynamics is considered in a follow up paper \cite{drofler_journal_2020}. This paper showed some interesting findings on frequency of ac system and dc voltage  dynamics of GFCs in presence of dc-side and ac current limitations. It was demonstrated that in presence of dc-side current limit, the dc voltages of class-A GFCs can become unstable under large increase in load, while class-B GFCs demonstrate increased robustness in stability, since regulation of their ac side angle dynamics takes into account the dc voltage dynamics. However, analytical treatment of stability guarantees in presence of dc-side current limitations was reserved for future research. 
In addition, we feel that there
is a need to complement the efforts in contrasting the basics of class-A and class-B GFCs in these papers by presenting the characteristics of class-A GFCs in the converter power– dc voltage plane and also bringing more clarity on their
fundamental difference with the class-B counterpart. 

Thus motivated, the objectives of this paper are twofold -- (1) develop an understanding of the fundamental difference between the two classes of GFC controls; and (2) provide analytical guarantees of stability (for class-A and -B) and sufficiency conditions of instability (for class-A) in presence of dc-side current limitation, when such converters are connected to a power system with SG-based conventional generation. Presence of both ac and dc-side current limitations is considered out of scope for this work and will be reported in a future paper. Nevertheless, we have presented a discussion on this topic in Section~\ref{sec:StabDClimits}.C.

\section{ Classes of GFC Controls: Reduced-Order Model}\label{sec:GFC-class}
\par A typical circuit diagram of a GFC interfacing renewable resources is shown in Fig. \ref{fig_converter_model} whose dc bus is connected for example, to a PV solar unit or the dc side of ac-dc converter of a Type-4 wind turbine. Therefore, we restrict our focus to the dc to ac unidirectional power flow scenario, i.e. energy storage is excluded from our analysis. The notations associated with parameters and variables mentioned in this figure are standard and self-explanatory, see \cite{yazdani} for example. 

\vspace{-0pt}
\begin{figure}
        \centering
        \includegraphics[width=\linewidth]{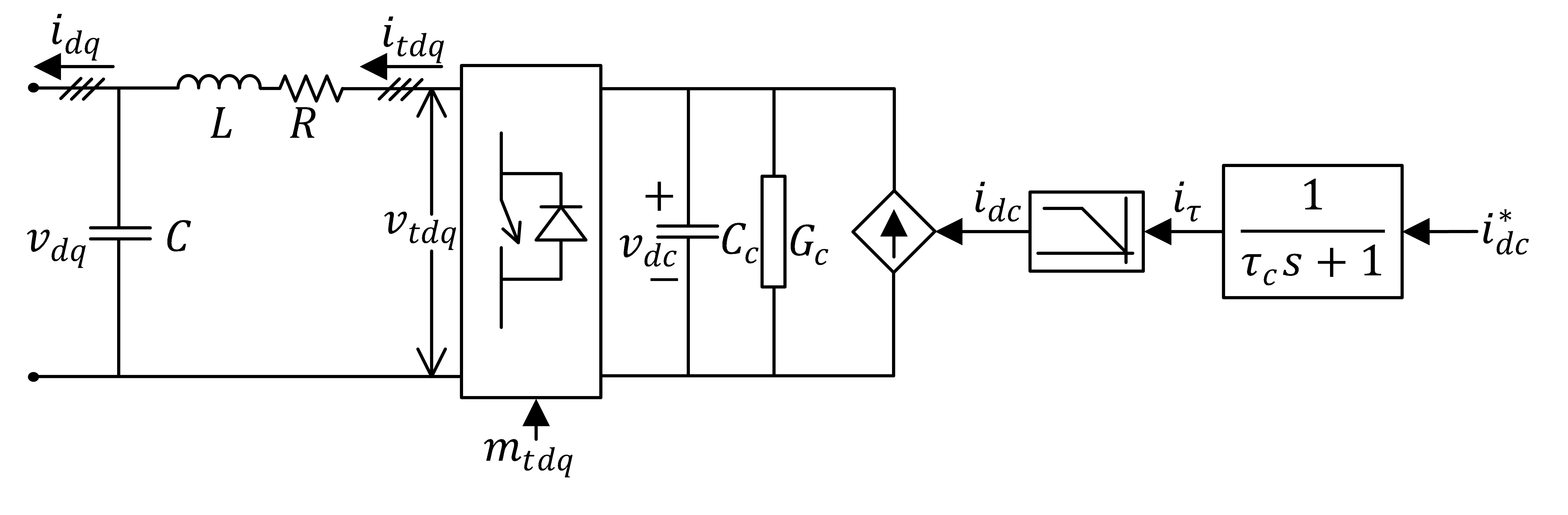}
        \vspace{-15pt}
        \caption{Circuit diagram of GFC.}%
        \vspace{-15 pt}
        \label{fig_converter_model}
\end{figure}
\vspace{-0 pt}

\par The GFC is controlled in a synchronously rotating $d-q$ reference frame whose angular frequency $\omega_c$ is imposed by the converter. The standard inner current control, albeit without any limits and voltage control loops, common across class-A and class-B, are shown in Fig.~\ref{fig_control_block}. It is the outer loops where the GFC control strategies differ -- for further details on class-A and class-B outer loops, the readers are referred to  \cite{drofler_cdc_2017,drofler_journal_2020}.

\vspace{-0pt}
\begin{figure}[ht]
        \centering
          \includegraphics[width=\linewidth]{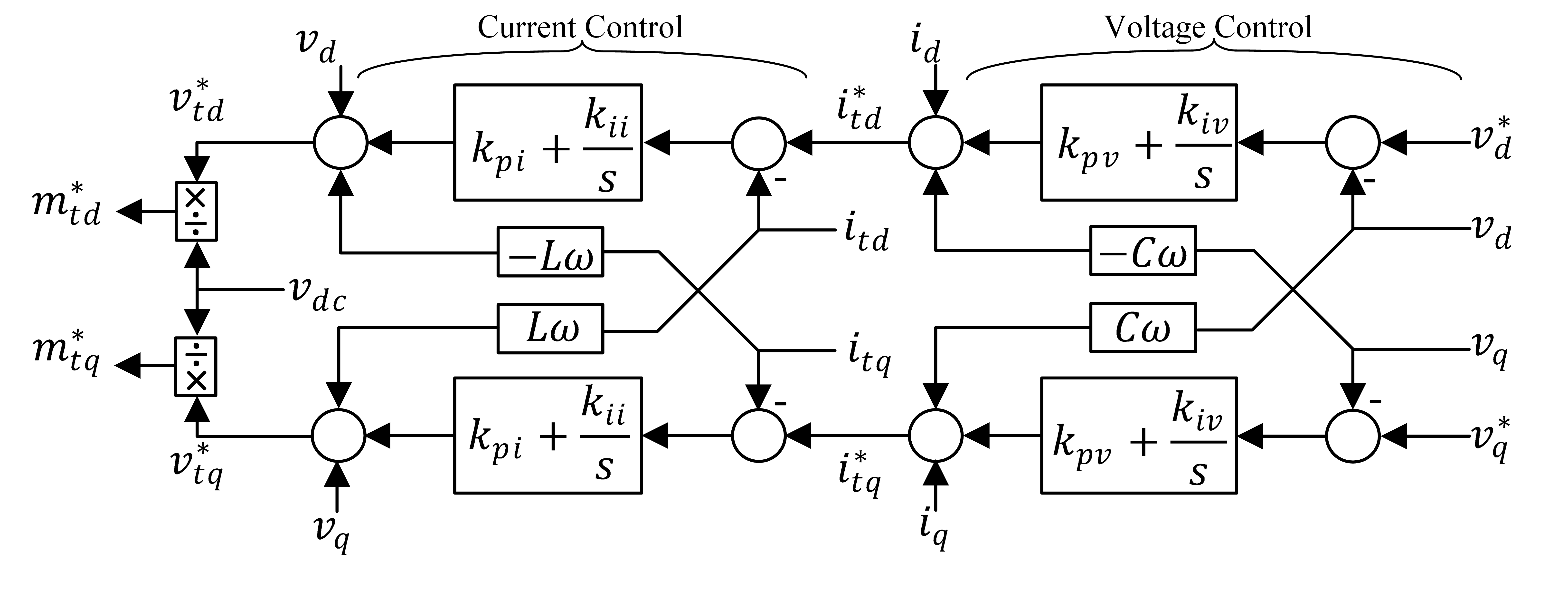}
            \caption{Block diagram of voltage and current control loops.}%
        \label{fig_control_block}
\end{figure}
\vspace{-0pt}

First, we consider a simple test system shown in Fig. \ref{fig_test_systeml} with one SG and one GFC. Based upon modeling adequacy study presented in \cite{drofler_cdc_2017}, we assume that the control loops shown in Fig. \ref{fig_control_block} track the references instantaneously, and the transmission dynamics and losses can be neglected. Moreover, since our focus is on the dynamics of the real power channel, we represent the dc-link dynamics of the GFC and the swing dynamics of the SG along with corresponding turbine-governor dynamics. Also, SGs are assumed to have adequate headroom to deliver any load change and a dc power flow assumption is made. Next, we present the reduced-order model of this system based on the time-scale separation of ac and dc dynamics  \cite{drofler_cdc_2017,drofler_journal_2020}.

\vspace{-0 pt}

\subsection{ Reduced-order Model of Class-A GFCs}\label{sec:classA}
\vspace{5pt}
\par Neglecting the time constant  $\tau_{c}$ of the dc energy source in Fig. \ref{fig_converter_model}, we can derive the test system model with class-A GFC shown in Fig. \ref{fig:block_diagram}(a):
\vspace{-0 pt}
	\begin{subequations}\label{eq:classAmodel}\small
	\begin{align}
 \dot v_{dc}  = \frac{1}{{C_c }}\left[ { - G_c v_{dc}  + sat\left( {k_c \left( {v_{dc}^*  - v_{dc}  } \right), i_{dc}^ {max}  } \right) - \frac{{P_c }}{{v_{dc} }}} \right] \\
 \dot \phi  = \underline \omega  _c-\underline \omega  _g; ~~ \underline \omega  _c = - d_{pc} \left( {P_c  - P_c^ *  } \right);~~~ \underline \omega  _g  = \omega _g  - \omega _g^ *   \\  
 \dot {\underline \omega _g}  = \frac{1}{{2H_g }}\left[ {P_{\tau g}  - P_g } \right]   \approx \frac{1}{{2H_g }}\left[ {P_{\tau g}  + b\phi - P_{Lg} } \right] \\ 
 \dot P_{\tau g}  = \frac{1}{{\tau _g }}\left[ {P_g^ *   - d_{pg} \underline \omega  _g  - P_{\tau g} } \right] 
 \end{align}    
\end{subequations}
\vspace{-0 pt}

where, $c,g,\tau g$: subscripts corresponding to GFC, SG, and turbine-governor, $*$: superscript for reference quantities, $v_{dc}$: dc-link voltage, $C_{c}$: dc-link capacitance, $G_c$: conductance representing dc-side losses, $k_c$: dc voltage droop constant, $i_{dc}^{max}$: dc-side current limit reflecting the capacity of the renewable resource, $P,~P_L$: real power output, load, $\phi$: angle difference between bus voltages of GFC and SG, i.e., $\phi = \theta_{c}-\theta_{g}$, $d_{pc}$: coefficient of droop/dVOC/VSM control, $\omega$: angular frequency, $H_g$: SG inertia constant, $\tau_g$: turbine time constant, $d_{pg}$: SG inverse governor droop, and $b$: transmission line susceptance.

\vspace{-0pt}
\begin{figure}[t]
        \centering
            \includegraphics[width=\linewidth]{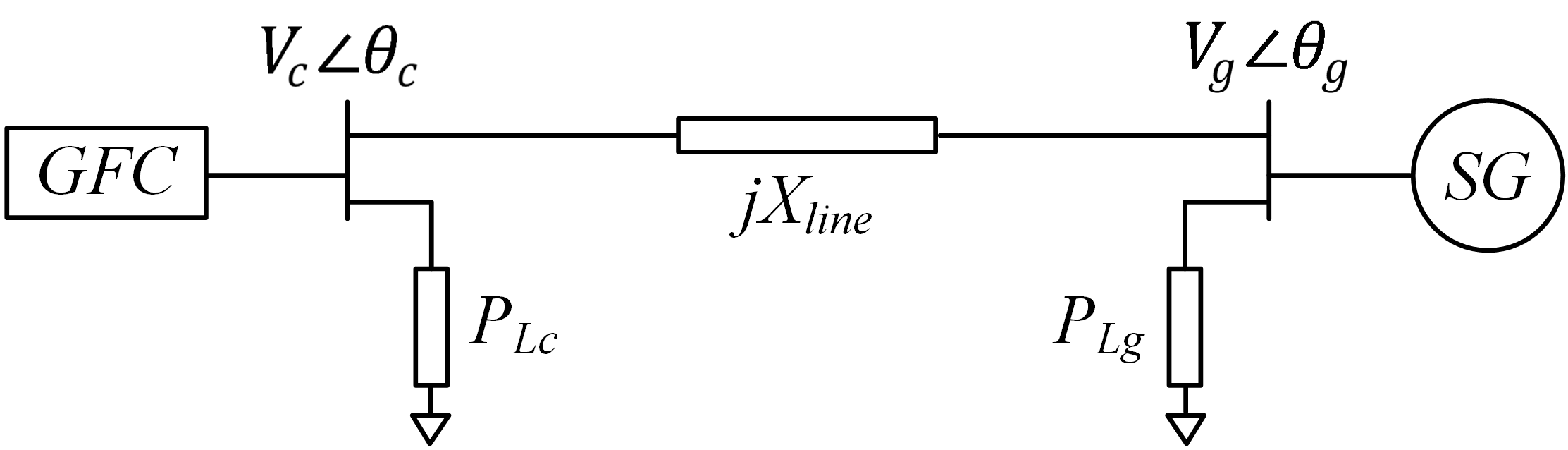}
            \caption{Single-line diagram of the test system.}%
        \label{fig_test_systeml}
        \vspace{-15 pt}

\end{figure}
\begin{figure}[ht]
         \includegraphics[width=\linewidth]{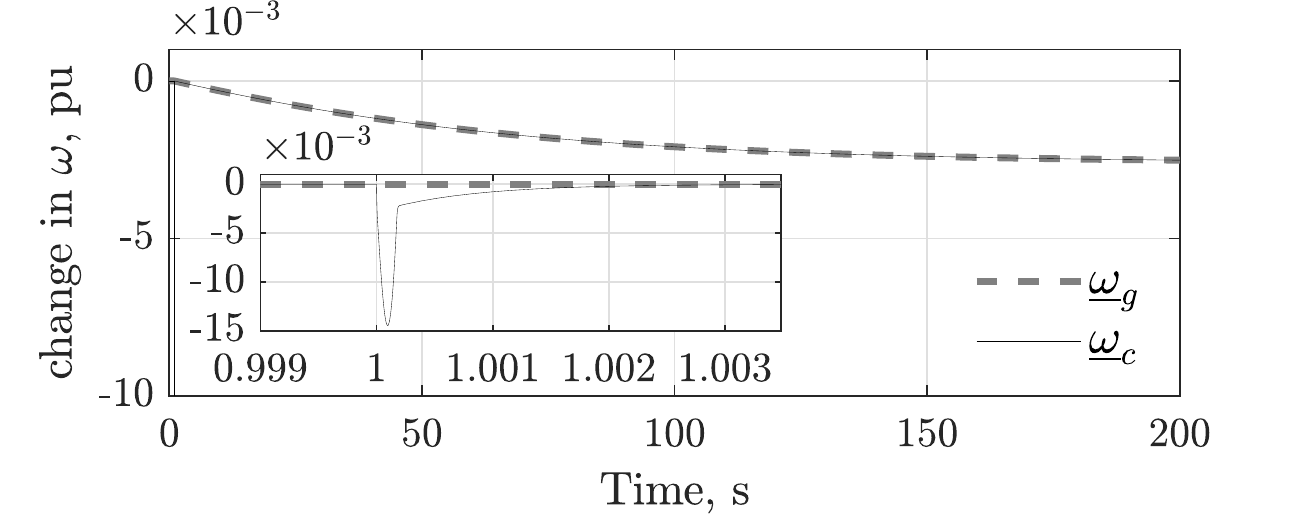}
     \caption{$\omega_{g}$ and $\omega_{c}$ for class-B GFC}
\label{fig_wg_wc_class_b}
\vspace{-15pt}
\end{figure}

\begin{figure*}[t]
        \centering
            \includegraphics[width=1.0\textwidth]{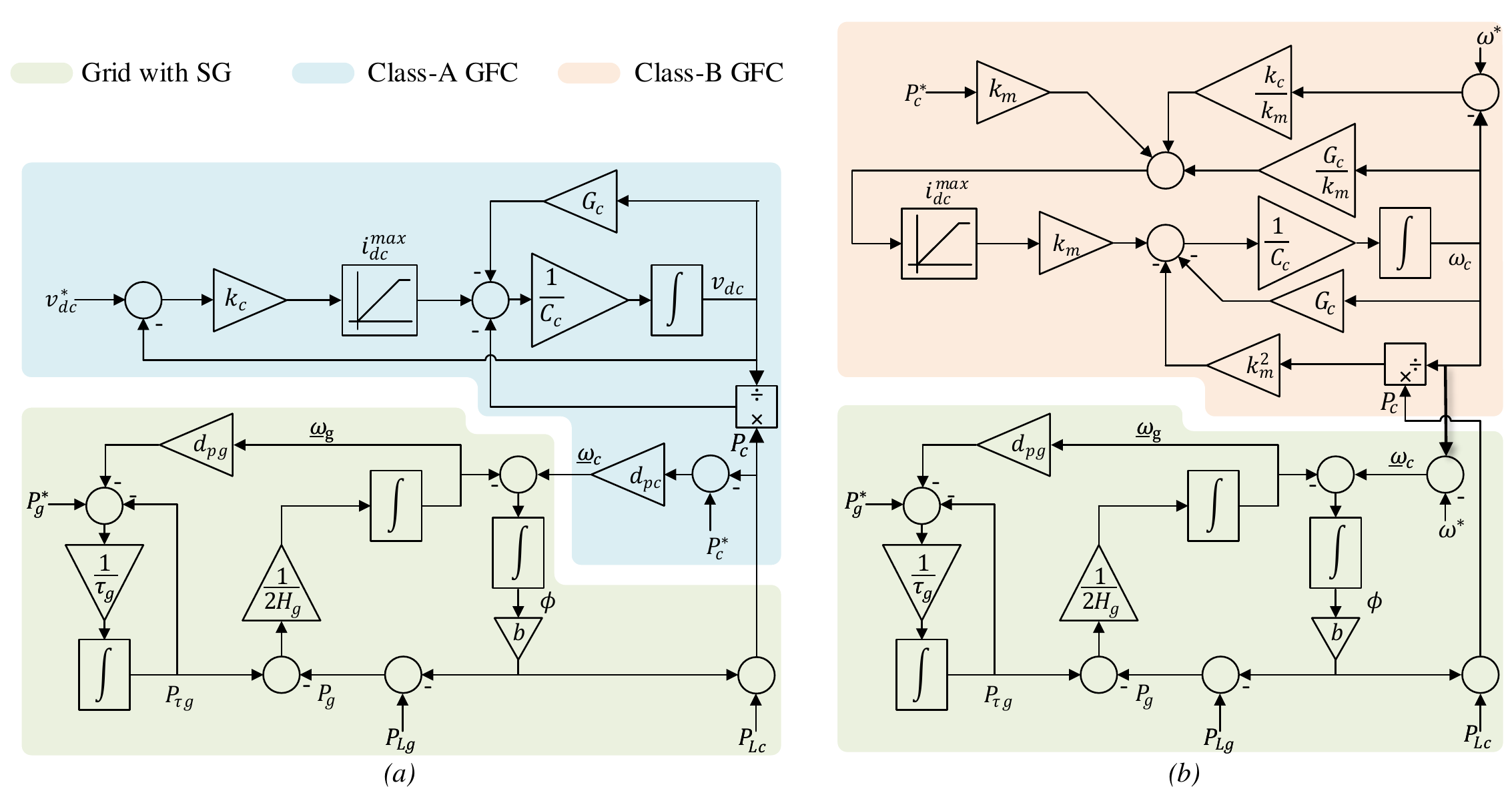}
            \caption{Block diagram of simplified mathematical models of the test system in Fig. \ref{fig_test_systeml} for: (a) Class-A GFCs (droop, dVOC, VSM control) and (b) Class-B GFCs (matching control): feedback path from GFC to grid is highlighted.}%
        \label{fig:block_diagram}
        \vspace{-0 pt}

\end{figure*}

\subsection{Reduced-order Model of Class-B GFCs} \label{sec:classB} Modifying (1a) by including the feedforward terms considered in matching control as in \cite{drofler_journal_2020}, we can write:
\begin{equation}\label{eq:classBvdc}
\small
C_c \dot v_{dc}  =  - G_c v_{dc}  + sat\left( {G_c v_{dc}  + \frac{{P_c^* }}{{v_{dc}^* }} + k_c \left( {v_{dc}^*  - v_{dc} } \right),  i_{dc}^{\max } } \right) - \frac{{P_c }}{{v_{dc} }}
\end{equation}
\normalsize
With matching control law $k_m v_{dc}  = \omega _c $ and $k_m v_{dc}^*  = \omega ^*  = 1$ pu, we can modify \eqref{eq:classBvdc} and \eqref{eq:classAmodel}(b)-(d) to derive the test system model with class-B GFC, which is shown in Fig. \ref{fig:block_diagram}(b).
In presence of matching control, as the angle dynamics is very fast compared to the machine dynamics, a reasonable assumption is $\omega _c  \approx \omega _g  \Rightarrow \underline \omega  _c  \approx \underline \omega  _g  = \omega _g  - \omega ^*$ \cite{drofler_journal_2020}. This can be shown through time-domain simulation of Fig. \ref{fig:block_diagram}(b) following a step change in $P_{Lc}$, which is highlighted in Fig. \ref{fig_wg_wc_class_b}. Since $\underline \omega_c$ and  $\underline \omega_g$ are indistinguishable with step change in $P_{Lg}$, it is not shown. With this approximation, we can write: 
\begin{equation}
\normalsize
\begin{array}{l}
 \frac{{C_c }}{{k_m^2 }}\underline {\dot \omega } _g  = sat\left( { - \frac{{k_{c} }}{{k_m^2 }}\underline \omega  _g , \underline P _c^{\max } } \right) + P_c^*  - P_c \frac{{\omega ^* }}{{\omega _g }} \\ 
 ~~~~~~\approx sat\left( { - d_{pc} \underline \omega  _g , \underline P _c^{\max } } \right) - \underline P _c  
 \end{array}
\end{equation}
With $ \frac{{C_c }}{{k_m^2 }} \approx 0$ as assumed in~\cite{drofler_journal_2020}, we can write:
\begin{equation}
\underline P _c  =  - sat\left( { d_{pc} \underline \omega  _g , \underline P _c^{\max } } \right)
\end{equation}
where, $d_{pc}  = \frac{{k_{c} }}{{k_m^2 }},\underline P _c^{\max }  = v_{dc}^* i_{dc}^{\max }  - G_c {v_{dc}^*}^2  - P_c^* ,\underline P _c  = P_c  - P_c^*. $
With total load in the system $P_L = P_{Lg} + P_{Lc}$ and power balance under nominal condition, i.e. $-P_g^* - P_c^* + P_L^* = 0$, we can write:
\begin{equation}\label{eq:ClassBmodel}
\begin{array}{l}
 \underline {\dot \omega } _g  = \frac{1}{{2H_g }}\left( {\underline P _{\tau g}  - sat\left( { d_{pc} \underline \omega  _g, \underline P _c^{\max } } \right) - \underline P _L } \right) \\ 
 \underline {\dot P} _{\tau g}  = \frac{1}{{\tau _g }}\left( { - \underline P _{\tau g}  - d_{pg} \underline \omega  _g } \right) 
 \end{array}
\end{equation}
where, $\underline P _{\tau g} = P _{\tau g} - P _{g}^*$ and $\underline{ P _L } = P _L - P _L^*$.

\subsection{Discussion on Fundamental Difference between Class-A and Class-B GFCs}
\label{sec: Difference} 
\vspace{5pt}
The model of class-A GFCs in \eqref{eq:classAmodel} is shown in a block diagram form in Fig. \ref{fig:block_diagram}(a). The most striking aspect of this class of control is that it merely acts as a buffer to adjust the frequency of its terminal voltage in order to deliver the power $P_c$ demanded by the system, which in turn affects the dc-link voltage dynamics. The converter has no direct control over $P_c$ and the dc-link dynamics does not have any `feedback mechanism' to alter it. Therefore, the stability of the dc-link voltage of class-A GFCs described by (\ref{eq:classAmodel}a) can be analyzed in isolation. Let, $v_{dc} = x>0 $, $v_{dc}^{*} = x^{*}$, $P_c = u >0$, and $(\bar{x}, \bar{u})$, $\bar x>0, \bar u >0$ be the equilibrium point. Also, assume $x^*$ is chosen such that the allowable maximum value of $x$ is $\tilde x^* = \frac{k_c}{(k_c + G_c)}x^*$, i.e. when $x \rightarrow \tilde x^*$, protective circuits will kick in and limit the dc voltage. Define, $y=x-\bar{x} \Rightarrow x = y+\bar{x}, v=u-\bar{u} \Rightarrow u = v+\bar{u}$.
Now, (\ref{eq:classAmodel}a) can be written as:
\begin{equation}\label{eqn_dc_dynamics_equilibrium}
    \dot{y} = \frac{1}{C_{c}}[-G_{c}(y+\bar{x})+sat(k_{c}(x^{*}-y-\bar{x}), i_{dc}^{max})-\frac{v+\bar{u}}{y+\bar{x}}]
\end{equation}
This equation is in the form $\dot{y} = f(y, v), ~y = h(y)$, where $f:D_y  \times D_v  \to \mathbb{R}$ is locally Lipschitz in $(y, v)$, $h:D_y  \to D_y$ is continuous in $(y, v)$, $f(0, 0) = 0$, and domains $D_y =(-\bar{x},~\tilde x^* - \bar{x})  \subset \mathbb{R}, D_v  \subset \mathbb{R}$ contain the origin. The equilibrium $(\bar{x}, \bar{u})$ satisfies the following equation:
    \[
   \bar{u} = \begin{cases}
f_1: -G_{c}\bar{x}^2 + k_{c} \bar{x}(x^* - \bar{x}), ~~if~~ |k_c(x^* - \bar{x})|\leq i_{dc}^{max} & \\
f_2: -G_{c}\bar{x}^2 + \bar{x}i_{dc}^{max},~~otherwise \end{cases}
    \]
Depending upon the value of $x$ where the maxima of $\bar{u}$ is found, we can get four types of characteristics in $x-u$ plane as shown in Fig. \ref{fig:pdc_vdc}. Out of these, the typical case is that in Fig.~\ref{fig:pdc_vdc}(a) -- going forward, unless otherwise mentioned, we will consider this characteristic. We note that for any given $\bar{u}$, there exists two equilibria $\bar{x}_1 \in \Omega_1 = \left[ {x_m ,\tilde x^* } \right)$ and $\bar{x}_2 \in \Omega_2 = 
\left( {0,\left. {x_m } \right]} \right.$, where $x_m = x^* - \frac{i_{dc}^{max}}{k_c}$.

\begin{figure}[ht]
        \vspace{-0pt}
        \centering
            \includegraphics[width=\linewidth]{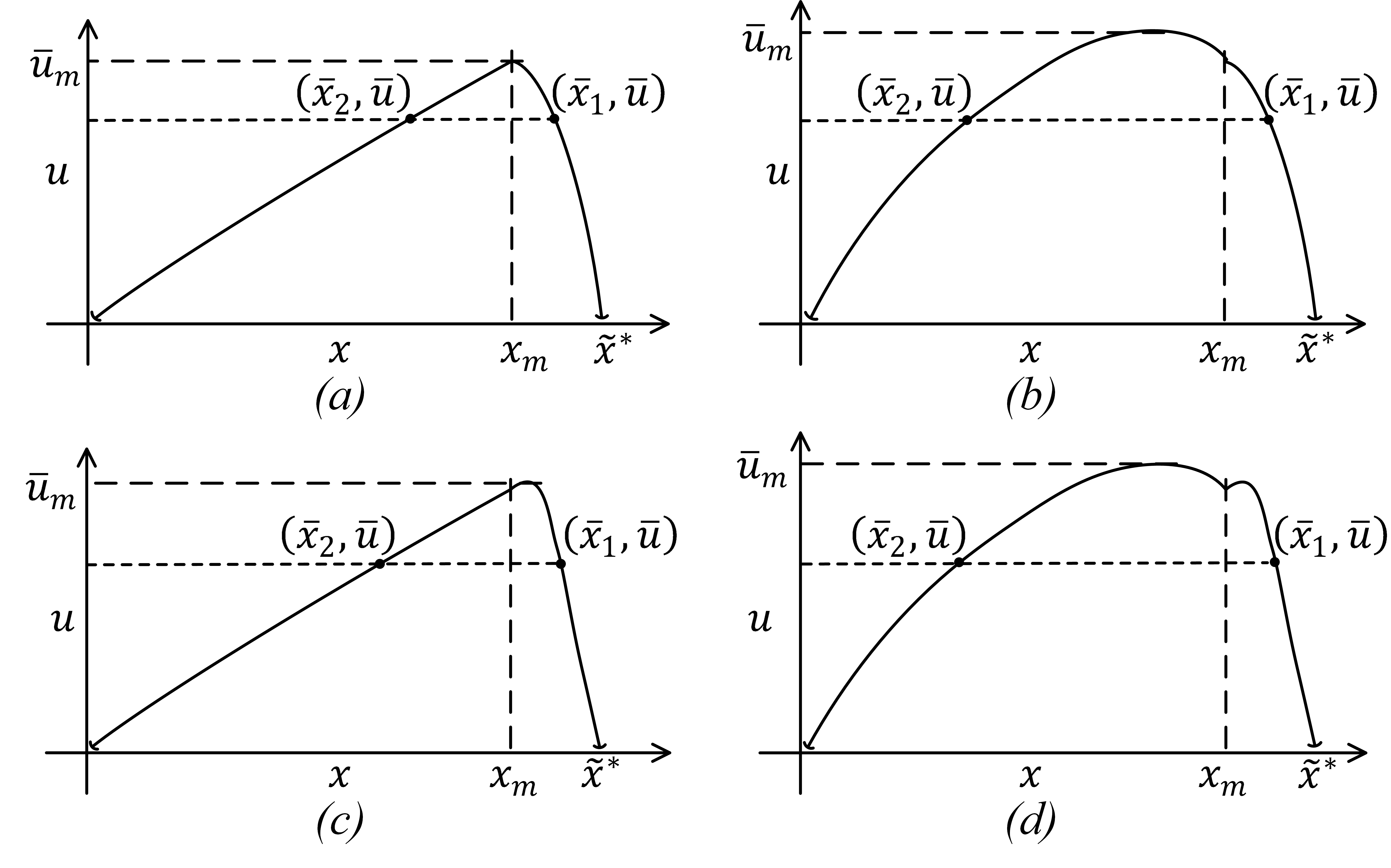}
            \caption{Four possible $u$ vs $x$ characteristics.}%
        \label{fig:pdc_vdc}
        \vspace{-0pt}
\end{figure}

In contrast, the model of class-B GFCs before reduction to the form in \eqref{eq:ClassBmodel} is shown in Fig. \ref{fig:block_diagram}(b). The most important difference with respect to its class-A counterpart is that it has a `feedback mechanism' from the dc-link voltage dynamics to the rest of the system that can alter the power $P_c$ demanded from the GFC. Thus, the stability of dc-link voltage can not be analyzed in isolation and a reduced-order model in \eqref{eq:ClassBmodel} is used for this purpose. Assuming $z = \left[ {\underline \omega  _g ~~\underline P _{\tau g} } \right]^T ,~~w =  - \underline P _L$, \eqref{eq:ClassBmodel} can be expressed as $\dot{z} = g(z,w), ~g: \R^{2}\times\R \rightarrow \R^{2}$, where $g$ is locally Lipschitz in $(z, w)$, and $g(0, 0) = 0$.
From a control design point of view, the fundamental difference between class-A and B can also be perceived as their dependency on either the ac or dc quantities \cite{gao2020gridforming,tayyebi2020hybrid}.
\vspace{5 pt}

\section{Stability Analysis in Presence of dc-side Current Limits}\label{sec:StabDClimits}
We focus on stability analysis of $v_{dc}$ in presence of dc-side current limits. To that end, we establish the following: (1)  Lyapunov stability and region of attraction (ROA) for class-A and class-B GFCs, (2) Sufficiency condition for input-output stability for class-A GFCs, (3) Sufficiency condition for instability of class-A GFCs, and (4) Sufficiency condition for input-to-state stability for class-B GFCs. 
\subsection{Stability Analysis of Class-A GFC}\label{sec:StabClassA}
We first focus on the reduced model of the 2-bus test system in Fig.~\ref{fig_test_systeml} and present the following Theorems and Lemmas.
\begin{theorem}\label{Th:ClassALyap}
For class-A GFCs, the equilibrium $\bar{x}_1$ is asymptotically stable with ROA $\mathcal{R_A} = \left( {\bar x_2 ,\left. {\tilde x^* } \right)} \right.$.
 \end{theorem}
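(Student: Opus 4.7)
The plan is to reduce Theorem~\ref{Th:ClassALyap} to a scalar autonomous ODE in $x$ by fixing $u=\bar{u}$, characterize the sign of $\dot{x}$ via the $u$--$x$ characteristic in Fig.~\ref{fig:pdc_vdc}(a), and then combine a quadratic Lyapunov function with a one-dimensional monotonicity argument to recover exactly the stated region of attraction.

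First I would freeze the input by setting $v=0$ in \eqref{eqn_dc_dynamics_equilibrium} and rewrite the resulting autonomous dynamics in the $x$-coordinate as
\begin{equation*}
\dot{x} \;=\; \frac{1}{C_c\,x}\bigl[\Phi(x)-\bar{u}\bigr],\qquad \Phi(x) := x\cdot sat\!\bigl(k_c(x^{*}-x),\,i_{dc}^{max}\bigr)-G_c\,x^{2}.
\end{equation*}
Since $x>0$ on the admissible domain $(0,\tilde{x}^{*})$, the sign of $\dot{x}$ coincides with the sign of $\Phi(x)-\bar{u}$. Using case~(a) of Fig.~\ref{fig:pdc_vdc}, I would argue that $\Phi(x)=\bar{u}$ has exactly the two roots $\bar{x}_2\le x_m\le \bar{x}_1$ identified in the setup, and deduce the sign pattern $\Phi(x)-\bar{u}<0$ on $(0,\bar{x}_2)$, $>0$ on $(\bar{x}_2,\bar{x}_1)$, and $<0$ on $(\bar{x}_1,\tilde{x}^{*})$.

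Next I would take the Lyapunov candidate $V(x)=\tfrac{1}{2}(x-\bar{x}_1)^{2}$, which is positive definite with respect to $\bar{x}_1$. Its time derivative along trajectories reads
\begin{equation*}
\dot{V}(x) \;=\; \frac{(x-\bar{x}_1)\bigl(\Phi(x)-\bar{u}\bigr)}{C_c\,x},
\end{equation*}
and the sign analysis above forces $(x-\bar{x}_1)$ and $\Phi(x)-\bar{u}$ to have opposite signs throughout $\mathcal{R_A}\setminus\{\bar{x}_1\}$, hence $\dot{V}<0$ there. This yields local asymptotic stability of $\bar{x}_1$ on any sublevel set of $V$ contained in $\mathcal{R_A}$. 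To upgrade to the full ROA $\mathcal{R_A}=(\bar{x}_2,\tilde{x}^{*})$, which is generically asymmetric about $\bar{x}_1$ and therefore not a single sublevel set of $V$, I would invoke the scalar nature of the system: because $\dot{x}$ is sign-definite on each of $(\bar{x}_2,\bar{x}_1)$ and $(\bar{x}_1,\tilde{x}^{*})$, every trajectory starting in $\mathcal{R_A}$ is monotone in $t$, remains trapped in its starting subinterval, and is bounded; it therefore converges to a limit, which must be $\bar{x}_1$ since it is the unique equilibrium in the closure of that subinterval.

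The main obstacle I anticipate is handling the non-smoothness of $sat(\cdot)$ at the switching point $x=x_m$: $\Phi$ is only piecewise smooth and changes branch there, so the equilibrium count, the unimodal shape of $\Phi$ depicted in Fig.~\ref{fig:pdc_vdc}(a), and the sign pattern of $\Phi(x)-\bar{u}$ all need to be verified branch by branch (the saturated branch contributes the quadratic $i_{dc}^{max}\,x-G_c x^{2}$, and the unsaturated branch contributes $k_c x^{*}x-(k_c+G_c)x^{2}$, the two expressions meeting continuously at $x_m$). Once that piecewise check is carried out carefully, the remaining argument is essentially one-dimensional phase-line analysis and poses no real difficulty.
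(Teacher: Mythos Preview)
Your proposal is correct and arguably more direct than the paper's own argument. The paper proceeds in two stages: on $\Omega_1=[x_m,\tilde x^{*})$ it uses the same quadratic Lyapunov function $V_1=\tfrac{C_c}{2}y^2$ (your $V$ up to a constant) to establish asymptotic stability of $\bar x_1$, and then it analyzes the second equilibrium $\bar x_2\in\Omega_2$ separately via Chetaev's theorem, introducing two auxiliary functions $V_2,V_3$ to show that $\bar x_2$ is unstable and that trajectories starting in $(\bar x_2,x_m]$ are pushed to $x_m$; the ROA is obtained by patching $(\bar x_2,x_m]$ onto $\Omega_1$. You instead exploit from the outset that the system is scalar and autonomous: the sign of $\dot x$ is read off directly from the graph of $\Phi$ against the horizontal line $\bar u$, and a single phase-line/monotonicity argument yields both asymptotic stability and the full ROA in one stroke, without ever invoking Chetaev. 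Your route is more elementary and avoids the piecewise patching; the paper's route has the side benefit of isolating the instability of $\bar x_2$ as a standalone fact (referenced later in the remark comparing ROAs of class-A and class-B) and of staying within the Lyapunov/Chetaev formalism that is reused in the subsequent exponential-stability and input--output results. The branch-by-branch verification of the unimodal shape of $\Phi$ that you flag as the main obstacle is indeed the only place requiring care, and it corresponds exactly to the paper's appeals to the ``typical case'' in Fig.~\ref{fig:pdc_vdc}(a) and to the inequalities $\bar x_1>x_m>\tilde x^{*}/2$ and $i_{dc}^{\max}/(2G_c)>\bar x_2$.
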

 
 \begin{proof}
 Corresponding to the domain $\Omega_1$ for $x$,  \eqref{eqn_dc_dynamics_equilibrium} can be rewritten as:
 \begin{equation}\label{eq:ClassAvDC}
\dot y = \frac{1}{{C_c }}\left[ { - G_c y - k_c y + \frac{{\bar u}}{{\bar x_1 }}\frac{y}{{y + \bar x_1 }} - \frac{v}{{y + \bar x_1 }}} \right]
 \end{equation}
 where, $y \in \Tilde{D}_y = [x_m -\bar{x}_1,~\tilde x^* - \bar{x}_1)  \subset D_y, ~v \in D_v$.
Choosing a Lyapunov function $V_1 = \frac{{C_c }}{2}y^2 ,y \in \tilde D_y$, we can write for the unforced system: $\dot V_1 = \left[ { - \left( {G_c  + k_c } \right) + \frac{{\bar u}}{{\bar x_1 }}\frac{1}{{y + \bar x_1 }}} \right]y^2 $. It can be shown that $\dot V_1$ is negative definite, if $y > \tilde x^* - 2\bar{x}_1 $. In the most typical case as in Fig.~\ref{fig:pdc_vdc}(a), $\bar{x}_1 > x_m > \frac{\tilde x^*}{2}$, which satisfies this condition. Therefore, $\bar{x}_1$ is asymptotically stable $\forall x \in \Omega_1$.   

To establish the ROA of $\bar x_1$, we  analyze Lyapunov stability of $\bar x_2\in \Omega_2$ shown in Fig.~\ref{fig:pdc_vdc}(a) with the same $\bar{u}$. To that end, we can rewrite \eqref{eqn_dc_dynamics_equilibrium} with $v = 0$ as: $\dot y = \frac{1}{{C_c }}\left[ { - G_c \left( {y + \bar x_2 } \right) + i_{dc}^{\max }  - \frac{{\bar u}}{{y + \bar x_2 }}} \right],~~y \in \bar D_y  = (  - \bar x_2 ,x_m  - \bar x_2 ]$. Choosing a continuously differentiable function $
V_2 = \frac{{C_c }}{2}\left[ {\bar x_2^2  - \left( {y + \bar x_2 } \right)^2 } \right], ~~y \in \bar D_y$, s.t. $V_2\left( 0 \right) = 0$. We choose a ball $B_r = \left\{ {y~ \in \R~~ |~~ |y| \leq r } \right\}$ and define set $U = \left\{ {y~ \in B_r~~ |~~ V_2 > 0 } \right\}$ -- note that $
U \subseteq \left( { - \bar x_2 ,\left. 0 \right)} \right.$. 
% This has been illustrated in Fig.~\ref{fig_yvsvy}. 
Therefore, we can choose $y(0) = y_0 \in U$ arbitrarily close to the origin s.t. $V_2(y_0)>0$. Also, $\dot V_2 > 0, \forall y~\in U$, if $y < \frac{{i_{dc}^{\max } }}{G_c} - 2\bar x_2 $. Taking into account the typical characteristics in Fig.~\ref{fig:pdc_vdc}(a) and analyzing local maxima of $f_2$, we can write $\frac{{i_{dc}^{\max } }}{{2G_c}} > \bar x_2$. Therefore, $\dot V_2 > 0, \forall y~\in U$, which provides a sufficiency condition for instability of $\bar{x}_2$ following Chetaev's theorem \cite{khalil}. This implies that $x(t)$ with any initial value $x(0) = x_0\in \left( {0,\left. {\bar x_2 } \right)} \right. \subset \Omega _2 $ will move away from $\bar x_2$ and reach $0$.

Next, choosing a continuously differentiable function $V_3 = \frac{{C_c }}{2}\left[ {\left( {y + \bar x_2 } \right)^2 - \bar x_2^2} \right], ~~y \in \bar D_y$, s.t. $V_3\left( 0 \right) = 0$ -- it is easy to follow similar arguments and show that $V_3 > 0, \forall y \in \left( { 0 ,\left. x_m - \bar x_2 \right]} \right. \subset \bar D_y$. This implies that $x(t)$ with any initial value $x_0\in \bar \Omega _2 = \left( {\bar x_2,\left. {x_m } \right]} \right. \subset \Omega _2 $ will move away from $\bar x_2$ and reach $x_m$.

We define $\mathcal{R_A} = \bar \Omega _2 \bigcup \Omega_1 = \left( {\bar x_2,\left. {\tilde x^*} \right)} \right.$, which is the largest open, connected, invariant set in $\Omega _2 \bigcup \Omega_1$, such that $\mathop {\lim }\limits_{t \to \infty } x(t) = \bar x_1 ,\forall x\left( 0 \right) \in \mathcal{R_A} $. This implies $\mathcal{R_A}$ is the ROA for equilibrium $\bar x_1$ of class-A GFCs. 
% \hspace*{\fill} $\square$
\end{proof}

% \begin{figure}[ht]
% \includegraphics[width=\linewidth]{}
% \caption{The set $U$ for $V(y) = \frac{1}{2}C_{c}[\bar{x}_2^{2}-(y+\bar{x}_2)^{2}]$ }
% \label{fig_yvsvy}
% \end{figure}

\begin{corollary}\label{Corr:ClassAExpStab}
For class-A GFCs, the equilibrium $\bar{x}_1$ is exponentially stable in $\Omega_1$. 
\end{corollary}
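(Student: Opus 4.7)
The plan is to upgrade the asymptotic-stability argument from Theorem~\ref{Th:ClassALyap} to an exponential decay estimate by reusing the same Lyapunov function $V_1 = \tfrac{C_c}{2}y^2$ and showing that its derivative has a uniform strictly-negative coefficient throughout the region $\Omega_1$, i.e., on the shifted domain $\tilde D_y = [x_m - \bar x_1, \tilde x^* - \bar x_1)$. Since $V_1$ trivially satisfies the two-sided quadratic bound $\tfrac{C_c}{2}|y|^2 \le V_1(y) \le \tfrac{C_c}{2}|y|^2$, the only remaining task is to produce an inequality of the form $\dot V_1 \le -\alpha V_1$ for some $\alpha > 0$, at which point the comparison lemma immediately delivers exponential convergence of $|y|$ at rate $\alpha/2$.

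From the proof of Theorem~\ref{Th:ClassALyap} I already have the explicit expression $\dot V_1 = \beta(y)\, y^2$, where $\beta(y) := -(G_c + k_c) + \tfrac{\bar u}{\bar x_1 (y + \bar x_1)}$. The next step is to observe that $\beta$ is strictly decreasing on $\tilde D_y$: differentiating gives $\beta'(y) = -\bar u / [\bar x_1 (y + \bar x_1)^2] < 0$ because $\bar u > 0$ and $y + \bar x_1 > 0$ throughout $\tilde D_y$. Consequently the supremum of $\beta$ on $\tilde D_y$ is achieved at the left endpoint $y = x_m - \bar x_1$, so it suffices to verify that $\beta(x_m - \bar x_1) < 0$ with a definite margin.

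I would then recall the algebraic rearrangement already carried out inside Theorem~\ref{Th:ClassALyap}: $\beta(y) < 0$ is equivalent to $y + 2\bar x_1 > \tilde x^*$. Evaluated at the endpoint $y = x_m - \bar x_1$, this reduces to $\bar x_1 + x_m > \tilde x^*$, which is automatic under the typical-case hypothesis $\bar x_1 > x_m > \tilde x^*/2$ of Fig.~\ref{fig:pdc_vdc}(a). Hence $\beta_{\max} := \sup_{\tilde D_y}\beta = \beta(x_m - \bar x_1) < 0$, and setting $\alpha := -2\beta_{\max}/C_c > 0$ yields $\dot V_1 \le -\alpha V_1$ on all of $\tilde D_y$. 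The comparison lemma then produces $|y(t)| \le |y(0)|\,e^{-\alpha t/2}$ for every trajectory starting in $\Omega_1$, which is the required exponential stability of $\bar x_1$ in $\Omega_1$.

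The main (and rather mild) obstacle is turning the pointwise strict inequality already extracted in Theorem~\ref{Th:ClassALyap} into a \emph{uniform} negative gap; monotonicity of $\beta$ reduces this to inspecting a single endpoint, where the geometric ordering $\bar x_1 > x_m > \tilde x^*/2$ of the typical characteristic makes the gap manifestly positive, so no new structural estimate is needed beyond what Theorem~\ref{Th:ClassALyap} already provides.
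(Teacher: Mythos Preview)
Your proposal is correct and follows essentially the same route as the paper: both reuse $V_1=\tfrac{C_c}{2}y^2$ and bound $\dot V_1$ uniformly by the left-endpoint value of $\beta$, obtaining exactly the constant $m=-(G_c+k_c)+\bar u/(\bar x_1 x_m)<0$. The only cosmetic difference is that the paper invokes Theorem~4.10 of Khalil at this point, whereas you spell out the comparison-lemma step to get the explicit decay estimate.
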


\begin{proof}
As mentioned in Theorem \ref{Th:ClassALyap}, the chosen Lyapunov function is $V_1 = \frac{{C_c }}{2}y^2 = \frac{{C_c }}{2}|y|^2 ,y \in \tilde D_y$. Also, $\dot V_1 \le \left[ { - \left( {G_c  + k_c } \right) + \frac{{\bar u}}{{\bar x_1 }}\frac{1}{{x_m }}} \right]|y|^2, y \in \tilde D_y$. Since, $m =  - \left( {G_c  + k_c } \right) + \frac{{\bar u}}{{\bar x_1 }}\frac{1}{{x_m }} < 0$, it satisfies all conditions in Theorem $4.10$ in \cite{khalil}, and therefore $\bar{x}_1$ is exponentially stable in $\Omega_1$. 
% \hspace*{\fill} $\square$
\end{proof}

\begin{theorem}\label{Th:ClassAInOut}
The dc voltage dynamics of class-A GFCs described in \eqref{eqn_dc_dynamics_equilibrium} is small-signal finite-gain $\mathcal{L}_p$ stable $\forall p \in [1, \infty]$, if $y(0) = y_0 \in \left\{ {|y|\leq r} \right\} \subset \tilde D_y, ~r>0$. Also, for a $r_v>0$, s.t. $\left\{ {|v|\leq r_v} \right\} \subset D_v, ~r_v>0, \forall v \in \mathcal{L}_{pe}$ with $\mathop {\sup |v|}\limits_{0 \le t \le \tau }  \le \min \left\{ {r_v ,|m|x_m r} \right\}$, the output $y(t)$ is bounded by the following relation $\norm{y_\tau}_{\mathcal{L}_p} \leq \frac{\norm{v_\tau}_{\mathcal{L}_p}}{|m|x_m}+\beta ~~\forall \tau \in [0, \infty)$, where $\beta = |y_0|,~~ if~~ p = \infty,$ and $\left( {\frac{C_c}{{p|m|}}} \right)^{\frac{1}{p}} |y_0 |,~~if~~p \in \left[ {1,\left. \infty  \right)}. \right.$ 
\end{theorem}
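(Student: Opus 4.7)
The plan is to reuse the Lyapunov function $V_1=(C_c/2)y^2$ from Corollary~\ref{Corr:ClassAExpStab}, keep the forcing term in its derivative, establish forward invariance of the ball $\{|y|\le r\}$, and then invoke the standard Lyapunov-based input-output bound (cf.\ Khalil, Theorem~5.1). Along trajectories of \eqref{eq:ClassAvDC},
\begin{equation*}
\dot V_1 = \left[-(G_c+k_c)+\frac{\bar u}{\bar x_1(y+\bar x_1)}\right]y^2 \;-\; \frac{vy}{y+\bar x_1}.
\end{equation*}
On $\tilde D_y$ we have $y+\bar x_1\ge x_m$, so the bracket is bounded above by the constant $m<0$ isolated in Corollary~\ref{Corr:ClassAExpStab}, giving $\dot V_1\le -|m|\,|y|^2+|v|\,|y|/x_m$.

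At any instant at which $|y|=r$, the hypothesis $\sup|v|\le |m|x_m r$ forces $\dot V_1\le -|m|r^2+|m|r^2=0$, so a trajectory starting inside $\{|y|\le r\}$ cannot exit it on $[0,\tau]$; this keeps $y(t)\in\tilde D_y$ throughout, while the companion restriction $\sup|v|\le r_v$ merely ensures $v\in D_v$ and preserves local Lipschitzness. Setting $W=\sqrt{V_1}=\sqrt{C_c/2}\,|y|$ and dividing the Lyapunov inequality by $2W$ converts the quadratic bound into the scalar linear inequality $\dot W\le -(|m|/C_c)W+|v|/(x_m\sqrt{2C_c})$, whose comparison solution, rescaled back to $|y|$, is
\begin{equation*}
|y(t)|\le |y_0|e^{-|m|t/C_c}+\int_0^t\frac{e^{-|m|(t-s)/C_c}}{C_c x_m}|v(s)|\,ds.
\end{equation*}

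The convolution kernel $k(t)=e^{-|m|t/C_c}/(C_c x_m)$ has $\|k\|_{\mathcal{L}_1}=1/(|m|x_m)$, so Young's convolution inequality gives $\|k*v_\tau\|_{\mathcal{L}_p}\le \|v_\tau\|_{\mathcal{L}_p}/(|m|x_m)$ for every $p\in[1,\infty]$. Evaluating the zero-input term contributes $|y_0|$ to the $\mathcal{L}_\infty$ bound and $(C_c/(p|m|))^{1/p}|y_0|$ to the $\mathcal{L}_p$ bound for $p\in[1,\infty)$, which match the stated values of $\beta$. The main subtlety is the invariance step, because the effective decay coefficient in $\dot V_1$ depends on $y$ and the clean quadratic bound is only valid while $y\in\tilde D_y$; the twin restrictions on $\sup|v|$ (in terms of $r$ and $r_v$) are designed precisely so that this sublevel-set argument closes under integration and the inequality can be propagated for all $t\in[0,\tau]$ without separately tracking the domain constraint.
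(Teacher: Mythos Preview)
Your proof is correct and follows essentially the same approach as the paper: both use the Lyapunov function $V_1=\tfrac{C_c}{2}y^2$, the decay constant $m$ from Corollary~\ref{Corr:ClassAExpStab}, and the Lipschitz-in-$v$ bound $|f(y,v)-f(y,0)|\le |v|/(C_c x_m)$ on $\tilde D_y$ to feed Theorem~5.1 of Khalil. The only difference is granularity---the paper simply verifies the hypotheses and cites the theorem, whereas you unpack its proof (invariance of $\{|y|\le r\}$, the comparison inequality for $W=\sqrt{V_1}$, and Young's convolution bound) explicitly.
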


\begin{proof}
We proved that $y = 0$ is exponentially stable in $\tilde D_y$ in Corollary~\ref{Corr:ClassAExpStab}. With Lyapunov function $V_1 = \frac{{C_c }}{2}y^2 = \frac{{C_c }}{2}|y|^2$, we have $\dot V_1 \le -|m||y|^2 , ~\left| {\frac{{\partial V_1 }}{{\partial y}}} \right| = C_c \left| y \right|, ~\forall y \in \tilde D_y$. Also, $\left| {f\left( {y,v} \right) - f\left( {y,0} \right)} \right| \le \frac{1}{{C_c x_m }}\left| v \right|,\left| {h\left( {y,v} \right)} \right| = \left| y \right|,~\forall y \in \tilde D_y,~\forall v\in D_v $. This satisfies all conditions in Theorem $5.1$ in \cite{khalil} and proves the conditions for input-output stability and bound on output.
% \hspace*{\fill} $\square$
\end{proof}

\begin{theorem}\label{Th:ClassAInstab}
For class-A GFCs, the equilibrium $\bar{x}_1 \in \Omega_1$ of \eqref{eqn_dc_dynamics_equilibrium} with $v=0$ is unstable if $\bar u >  -G_{c}(y + \bar x_1)^{2} + (y + \bar x_1) sat(k_{c}(x^* - y - \bar x_1), i_{dc}^{max})$ for any $y \in [-r,~0)$, where $r = \min \left\{ {\bar x_1 ,\tilde x^*  - \bar x_1 } \right\}$.
 \end{theorem}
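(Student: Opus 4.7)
The plan is to apply Chetaev's theorem, paralleling the argument used for $\bar x_2$ in Theorem~\ref{Th:ClassALyap}. The key algebraic observation is that multiplying \eqref{eqn_dc_dynamics_equilibrium} (with $v=0$) by $C_c(y+\bar x_1)>0$ yields $C_c(y+\bar x_1)\dot y = h(y) - \bar u$, where $h(y) := -G_c(y+\bar x_1)^2 + (y+\bar x_1)\,sat(k_c(x^*-y-\bar x_1),\,i_{dc}^{max})$ is precisely the expression appearing in the hypothesized inequality. Since $y+\bar x_1>0$ on the domain of interest, the sign of $\dot y$ coincides with that of $h(y)-\bar u$, so the assumption translates directly into $\dot y<0$ on $[-r,\,0)$.

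I would then mirror the construction of $V_2$ from Theorem~\ref{Th:ClassALyap} and take the continuously differentiable candidate $V(y) = \frac{C_c}{2}\bigl[\bar x_1^2-(y+\bar x_1)^2\bigr]$, which satisfies $V(0)=0$ and $V(y)>0$ on $(-2\bar x_1,\,0)$. Differentiating along trajectories gives $\dot V = -C_c(y+\bar x_1)\dot y = \bar u - h(y) > 0$ on $[-r,\,0)$ by hypothesis. Taking the ball $B_r=\{y\in\R:|y|<r\}$ and the set $U=\{y\in B_r : V(y)>0\}$, the definition $r=\min\{\bar x_1,\,\tilde x^*-\bar x_1\}$ serves two purposes: $r\le\bar x_1$ ensures $U=(-r,0)$ lies inside the positivity set of $V$, while $r\le\tilde x^*-\bar x_1$ keeps $U$ within the admissible domain $D_y$ of \eqref{eqn_dc_dynamics_equilibrium}. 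Since $V(0)=0$, $V>0$ and $\dot V>0$ on $U$, and $0\in\partial U$, Chetaev's theorem~\cite{khalil} yields instability of $y=0$, i.e., of $\bar x_1$.

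The main obstacle is bookkeeping rather than conceptual analysis: one must verify that $y+\bar x_1>0$ throughout $U$ so that the sign-preserving multiplication by $C_c(y+\bar x_1)$ is legitimate, and that the $sat$ term in $h(y)$ matches the actual right-hand side of the dynamics uniformly on $[-r,\,0)$ without having to split into saturated and unsaturated regimes. The chosen $r$ guarantees both, reducing the final step to a direct invocation of Chetaev's theorem.
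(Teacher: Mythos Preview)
Your proposal is correct and follows essentially the same approach as the paper: the paper defines the identical Chetaev function $V_4(y)=\frac{C_c}{2}\bigl[\bar x_1^2-(y+\bar x_1)^2\bigr]$, takes $U=\{y\in B_r:V_4(y)>0\}=[-r,0)$ with the same $r$, and computes $\dot V_4 = G_c(y+\bar x_1)^2-(y+\bar x_1)\,sat(k_c(x^*-y-\bar x_1),i_{dc}^{max})+\bar u = \bar u - h(y)$, which is precisely your expression. The only cosmetic difference is that the paper writes out $\dot V_4$ directly rather than routing through the intermediate observation $C_c(y+\bar x_1)\dot y = h(y)-\bar u$, but the algebra and the invocation of Chetaev's theorem are the same.
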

 
 \begin{proof}
The unforced system can be expressed as $\dot{y} = \frac{1}{C_{c}}[-G_{c}(y+\bar{x_1})+sat(k_{c}(x^{*}-y-\bar{x_1}), i_{dc}^{max})-\frac{\bar{u}}{y+\bar{x_1}}],~\forall y \in D_y \subset \R $.
Define a continuously differentiable function, $V_4:D_y\rightarrow \R $, $V_4(y) = \frac{1}{2}C_{c}[\bar{x}_1^{2}-(y+\bar{x}_1)^{2}]$ such that $V_4(0) = 0$.
%Now, choose $y_{0} \in D_y$ and $y_{0} < 0$. 
Choose $r \in (0, \min \left\{ {\bar{x}_1,\tilde x^{*}-\bar{x}_1} \right\}]$ such that the ball $B_{r} = \{y \in \R | ~|y| \leq r\}$, $B_{r} \subset D_y$. Define, $U = \{y \in B_{r} | V_4(y) > 0 \}$, implying $U=[-r,~0)$. Choose $y_0$ in the interior of $U$ $\implies$ $y_{0} < 0$. Hence, $V_4(y_{0}) > 0$ for any such $y_{0}$ arbitrarily close to the origin.  
% In figure (), it is shown that $y_{0}$ is in the interior of U.
Now, derivative of $V_4$ along the trajectory of $y$ is:
\par $\dot{V_4} = G_{c}(y+\bar{x}_1)^{2} - (y+\bar{x}_1)sat(k_{c}(x^{*}-y-\bar{x}_1), i_{dc}^{max}) + \bar u$
\par According to Chetaev's theorem \cite{khalil}, the sufficiency condition  for instability is $\dot V_4 >0, ~\forall y \in U$, which proves the theorem.
% \hspace*{\fill} $\square$
\end{proof}
\vspace{-2pt}
Now, we extend these proofs for a generic system with $m_1$ SGs and $n_1$ class-A GFCs, and introduce the following Corollary.

\begin{corollary}
Theorems \ref{Th:ClassALyap},\ref{Th:ClassAInOut},\ref{Th:ClassAInstab} and Corollary~\ref{Corr:ClassAExpStab} hold for any generic system.
\end{corollary}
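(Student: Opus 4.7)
The plan is to exploit the structural decoupling property of class-A GFCs emphasized in Section~\ref{sec: Difference}: the dc-link voltage of each class-A converter has no feedback path that alters the power $P_{c}$ it must supply. Concretely, in a generic system with $m_1$ SGs and $n_1$ class-A GFCs, the $i$-th dc-link still obeys a scalar equation of the same form as (\ref{eq:classAmodel}a), with its own parameters $C_{c,i},\,G_{c,i},\,k_{c,i},\,i_{dc,i}^{\max},\,v_{dc,i}^{*}$, while the outer droop/dVOC/VSM law produces $\underline{\omega}_{c,i} = -d_{pc,i}(P_{c,i} - P_{c,i}^{*})$ without any contribution from $v_{dc,i}$. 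I would therefore argue first that $P_{c,i}(t)$ may be treated as an exogenous input to the $i$-th dc-link dynamics, exactly as $u = P_c$ was treated in the 2-bus case.

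With this decoupling established, the next step is to apply Theorems~\ref{Th:ClassALyap},~\ref{Th:ClassAInOut},~\ref{Th:ClassAInstab} and Corollary~\ref{Corr:ClassAExpStab} verbatim to each GFC $i$, replacing $(x,u,\bar{x},\bar{u})$ by $(v_{dc,i},P_{c,i},\bar{v}_{dc,i},\bar{P}_{c,i})$ and all associated parameters by their index-$i$ counterparts. The Lyapunov functions $V_1,\ldots,V_4$, the ROA construction for the high-voltage equilibrium $\bar{x}_{1,i}$, the Chetaev argument ruling out the low-voltage equilibrium $\bar{x}_{2,i}$, and the invocation of Khalil's Theorem~5.1 used inside Theorem~\ref{Th:ClassAInOut} are all local to a single scalar state and never reference any other part of the network beyond the scalar input $P_{c,i}$; they therefore transfer without modification. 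The instability condition of Theorem~\ref{Th:ClassAInstab} likewise transfers, yielding an independent sufficient condition per GFC.

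The main obstacle I anticipate is checking that $P_{c,i}(t)$ qualifies as an admissible input signal for the small-signal finite-gain $\mathcal{L}_p$ result of Theorem~\ref{Th:ClassAInOut}. One must verify that in the full coupled model the deviation $v_i = P_{c,i} - \bar{P}_{c,i}$ is measurable and locally bounded, lies in $\mathcal{L}_{pe}$, and satisfies the amplitude bound $\sup_{0 \le t \le \tau}|v_i(t)| \le \min\{r_{v,i},\, |m_i|\, x_{m,i}\, r_i\}$. Since $P_{c,i}$ is produced by a cascade of locally Lipschitz swing and turbine-governor dynamics of the $m_1$ SGs together with the algebraic power-balance and angle relations of the ac network, such a bound can be secured by restricting load disturbances $\underline{P}_L$ to a sufficiently small compact set. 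Once this is in hand, the per-converter bound of Theorem~\ref{Th:ClassAInOut} on $\|y_{i,\tau}\|_{\mathcal{L}_p}$ applies GFC by GFC, and collecting the $n_1$ bounds delivers small-signal finite-gain $\mathcal{L}_p$ stability for the whole vector of dc-link voltage deviations, completing the extension.
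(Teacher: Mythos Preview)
Your proposal is correct and follows essentially the same approach as the paper: both rely on the structural decoupling of class-A GFCs established in Section~\ref{sec: Difference} to conclude that each dc-link can be analyzed in isolation with $P_{c,i}$ treated as an exogenous input, so the scalar results apply per converter. The paper's proof is in fact only a two-sentence appeal to that decoupling; your additional care in checking that $P_{c,i}(t)$ satisfies the admissibility hypotheses of Theorem~\ref{Th:ClassAInOut} goes beyond what the paper provides but does not constitute a different route.
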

\begin{proof}
Discussions from Section~\ref{sec: Difference} establish that the stability properties investigated in these theorems are independent of the systems as long as the GFC-level assumptions taken in Section \ref{sec:GFC-class} hold. Therefore, these theorems and the corollary hold individually for each of the $n_1$ class-A GFCs.
% \hspace*{\fill} $\square$
\end{proof}

\subsection{Stability Analysis of Class-B GFC}\label{sec:StabClassB}
In this section, we first analyze the stability of class-B GFCs for the 2-bus system shown in Fig.~\ref{fig_test_systeml} and present the following lemma and theorem.

\begin{lemma}\label{Lemma:ClassBLyap}
For class-B GFCs, the equilibrium $z = 0$ is globally asymptotically stable $\forall \; d_{pg}, d_{pc}$ $>0$.
\end{lemma}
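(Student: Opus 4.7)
The plan is to exhibit a quadratic Lyapunov function for the unforced system (with $\underline{P}_L = 0$, so that $z=0$ is indeed an equilibrium) and then invoke LaSalle's invariance principle to strengthen negative semidefiniteness of $\dot V$ into global asymptotic stability.

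First, I would try the weighted quadratic candidate
\[
V(z) \;=\; 2 H_g\, d_{pg}\, \underline{\omega}_g^{\,2} \;+\; \tau_g\, \underline{P}_{\tau g}^{\,2},
\]
which is positive definite and radially unbounded for any $d_{pg}, d_{pc} > 0$. The weights are chosen precisely so that when I differentiate along the dynamics in \eqref{eq:ClassBmodel} with $\underline{P}_L=0$, the cross terms $\underline{\omega}_g\,\underline{P}_{\tau g}$ coming from the two equations cancel exactly. A direct computation should then yield
\[
\dot V \;=\; -\,2\,d_{pg}\,\underline{\omega}_g\,\mathrm{sat}\!\left(d_{pc}\,\underline{\omega}_g,\; \underline P_c^{\max}\right) \;-\; 2\,\underline{P}_{\tau g}^{\,2}.
\]

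Next I would exploit the sector property of the saturation: since $d_{pc}>0$ and $\underline P_c^{\max}>0$ (which is an equilibrium-feasibility assumption already implicit in the model), the map $\underline\omega_g \mapsto \mathrm{sat}(d_{pc}\underline\omega_g,\underline P_c^{\max})$ lies in the first and third quadrants, so $\underline\omega_g\cdot \mathrm{sat}(d_{pc}\underline\omega_g,\underline P_c^{\max}) \ge 0$ for every $\underline\omega_g \in \mathbb{R}$, with equality iff $\underline\omega_g = 0$. Hence $\dot V \le 0$ globally, which already gives Lyapunov stability of the origin.

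To upgrade to global asymptotic stability, I would apply LaSalle's invariance principle. The set $E = \{ z : \dot V(z) = 0\}$ forces both $\underline{P}_{\tau g}=0$ and $\underline\omega_g\cdot \mathrm{sat}(d_{pc}\underline\omega_g,\underline P_c^{\max})=0$, i.e.\ $\underline\omega_g=0$; so $E=\{0\}$ and the largest invariant subset of $E$ is trivially the origin. Combined with radial unboundedness of $V$, this yields global asymptotic stability for all $d_{pg},d_{pc}>0$, completing the proof.

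The step I expect to require the most care is the sector/sign argument for the saturation term, since one must verify it on all three branches of $\mathrm{sat}$ (linear region and the two clipped regions) and check that the saturation bound $\underline P_c^{\max}$ being strictly positive is a benign assumption in the present setting; everything else reduces to a straightforward cross-term cancellation and a one-line LaSalle argument.
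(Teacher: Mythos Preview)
Your proposal is correct and essentially identical to the paper's proof: your Lyapunov function is simply $2d_{pg}$ times the paper's $V_5 = H_g\,\underline{\omega}_g^{2} + \tfrac{\tau_g}{2d_{pg}}\,\underline{P}_{\tau g}^{2}$, and the cross-term cancellation and sector argument for the saturation are the same. The only cosmetic difference is that you invoke LaSalle, whereas the paper observes directly that $\dot V$ is negative definite (which your own analysis of the set $E=\{0\}$ in fact establishes), so LaSalle is not needed.
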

\begin{proof}
For unforced system, $w = -\underline{P}_{L} = 0$. Choose Lyapunov function with $d_{pg} > 0$,
$V_5 = H_{g}\underline{\omega}_{g}^{2} + \frac{\tau_{g}}{2d_{pg}}\underline{P}_{\tau g}^{2}$.
\[    \Rightarrow \dot{V}_5 = - \frac{\underline{P}_{\tau g}^{2}}{d_{pg}}-\underline{\omega}_{g} sat(d_{pc}\underline{\omega}_{g},\underline P _c^{\max }) 
\]
Here, 
$\underline{\omega}_{g} sat(d_{pc}\underline{\omega}_{g}, \underline P _c^{\max }) > 0 ~~ \forall \; \underline{\omega}_{g} \in \R - \left\{ {0} \right\}, d_{pc} > 0 $. Thus, $\dot{V}_5$ is negative definite and radially unbounded $\forall \; d_{pg}, d_{pc} > 0 $. Therefore, the origin is globally asymptotically stable when this condition is satisfied.
% \hspace*{\fill} $\square$
\end{proof}
\begin{remark}
We observe that the ROA for $y = 0$ corresponding to the equilibrium $\bar x_1$ of class-A GFCs is limited to $y \in (\bar x_2 - \bar x_1, \tilde x^* - \bar x_1)$, while the same for $z = 0$ of class-B GFCs is $\R^2$. Also, equilibrium $\bar x_2$ of class-A GFCs is unstable.
\end{remark}

\begin{theorem}\label{Th:InputStateClassB}
The reduced-order model~\eqref{eq:ClassBmodel} is input-to-state stable with class $\mathcal{KL}$ function $\beta$ and class $\mathcal{K}$ function $\gamma \left( {\left| w \right|} \right) = c\max \left\{ {\chi _1 \left( {\left| w \right|} \right),\chi _2 \left( {\left| w \right|} \right)} \right\},~c>0$ for piecewise continuous $w(t)$  that is bounded in $t,~\forall t\geq0$ implying $\left\| {z(t)} \right\| \le \beta \left( {\left\| {z(t_0 )} \right\|,t - t_0 } \right) + \gamma \left( {\mathop {\sup }\limits_{\tau  \ge t_0 } | {w\left( \tau  \right)} |} \right),\forall t \ge t_0$ -- where, $\chi _1 \left( {\left| w \right|} \right) = \frac{{\underline P _c^{\max } }}{{d_{pc} }}\tanh ^{ - 1} \left( {\frac{{|w|}}{{\theta \underline P _c^{\max } }}} \right) $ and $\chi _2 \left( {\left| w \right|} \right) = \left[ {\frac{|w|d_{pg}}{{\theta }} \chi _1 \left( {\left| w \right|} \right)} \right]^{\frac{1}{2}}$, $~\forall w\in (-\theta \underline P _c^{\max } ,~\theta \underline P _c^{\max } ),~\underline P _c^{\max }\in \mathbb{R}_{>0}, ~0<\theta<1$, $\mathbb{R}_{>0}$ : \text{positive real space}.

\end{theorem}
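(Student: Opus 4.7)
The plan is to verify the Lyapunov-based ISS condition of Theorem~4.19 in \cite{khalil}: exhibit a continuously differentiable function $V$ with class-$\mathcal{K}_\infty$ bounds on $\|z\|$ such that $\dot V \le -\alpha(\|z\|)$ whenever $\|z\| \ge \rho(|w|)>0$, after which the $\mathcal{KL}$-$\mathcal{K}$ bound in the statement follows by the standard comparison argument. Reusing $V_5 = H_g \underline\omega_g^2 + \frac{\tau_g}{2d_{pg}}\underline P_{\tau g}^2$ from Lemma~\ref{Lemma:ClassBLyap}, the cross term cancels along the forced system~\eqref{eq:ClassBmodel} and leaves
\[
\dot V_5 = -\frac{\underline P_{\tau g}^2}{d_{pg}} - \underline\omega_g\, sat\!\left(d_{pc}\underline\omega_g,\underline P_c^{\max}\right) + w\,\underline\omega_g.
\]

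The key analytical move is to replace the piecewise-linear dissipation by the smooth lower envelope $\underline P_c^{\max}\tanh(d_{pc}\underline\omega_g/\underline P_c^{\max})$, justified by the pointwise inequality $x\,sat(x,1) \ge x\tanh(x)$ valid on all of $\R$. It is the invertibility of $\tanh$ on $(-1,1)$ that produces the $\tanh^{-1}$ in $\chi_1$ and explains the admissible window $|w|<\theta\underline P_c^{\max}$. With this surrogate in hand, I would introduce $\theta\in(0,1)$ and split the bound as $\dot V_5 \le -(1-\theta)D + R$, where $D = \underline P_{\tau g}^2/d_{pg} + \underline P_c^{\max}\underline\omega_g\tanh(d_{pc}\underline\omega_g/\underline P_c^{\max}) \ge 0$ is the candidate negative-definite term and $R = -\theta D + w\underline\omega_g$ is the residual that must be driven non-positive via a condition on $\|z\|$.

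Making $R\le 0$ is a two-branch case analysis on $|\underline\omega_g|$. In the branch $|\underline\omega_g|\ge\chi_1(|w|)$, inverting the surrogate gives $\theta\underline P_c^{\max}\underline\omega_g\tanh(d_{pc}\underline\omega_g/\underline P_c^{\max}) \ge |w||\underline\omega_g|\ge w\underline\omega_g$, so $R \le -\theta\underline P_{\tau g}^2/d_{pg}\le 0$. In the complementary branch $|\underline\omega_g|<\chi_1(|w|)$, the crude estimate $w\underline\omega_g \le |w|\chi_1(|w|)$ gives $R\le -\theta\underline P_{\tau g}^2/d_{pg}+|w|\chi_1(|w|)$, which is non-positive precisely when $|\underline P_{\tau g}|\ge\chi_2(|w|)$. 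Using the norm equivalence $\sqrt{a^2+b^2}\le\sqrt{2}\max(|a|,|b|)$ on $\R^2$, the disjunction ``$|\underline\omega_g|\ge\chi_1$ or $|\underline P_{\tau g}|\ge\chi_2$'' is implied by $\|z\|\ge\sqrt{2}\max\{\chi_1(|w|),\chi_2(|w|)\}$, which after composing with the quadratic bounds on $V_5$ yields the ISS gain $\gamma(|w|)=c\max\{\chi_1,\chi_2\}$ for a suitable $c>0$; the same argument, with $D$ positive-definite in $z$, produces $\alpha$ as a class-$\mathcal{K}_\infty$ function of $\|z\|$, and $\beta$ emerges from the quadratic upper and lower envelopes of $V_5$.

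I expect the main obstacle to be verifying that $\chi_1$ and especially $\chi_2$ are genuine class-$\mathcal{K}$ functions on $|w|\in[0,\theta\underline P_c^{\max})$: $\chi_1$ is monotone and vanishes at zero, but it blows up as $|w|\uparrow\theta\underline P_c^{\max}$, so care is needed at the boundary of admissibility, and the composite $\chi_2(|w|)=\sqrt{|w|d_{pg}\chi_1(|w|)/\theta}$ needs to be checked separately. A subsidiary subtlety is that the sub-case decomposition produces an ``or'' condition, and the translation to a single $\|z\|$ threshold of the exact form $c\max\{\chi_1,\chi_2\}$ claimed in the statement must be made explicit through the norm equivalence and the quadratic conversion factors.
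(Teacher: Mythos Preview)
Your proposal is correct and follows essentially the same route as the paper: the same Lyapunov function $V_5$, the same $\theta$-split of $\dot V_5$, the same two-branch case analysis on $|\underline\omega_g|$ yielding $\chi_1$ and $\chi_2$, and the same appeal to Theorem~4.19 in \cite{khalil}. The only cosmetic differences are that you make the inequality $x\,sat(x,1)\ge x\tanh(x)$ explicit (the paper uses it implicitly when it writes the threshold as $\tanh^{-1}(\cdot)$) and that you pass through the $2$-norm with a $\sqrt{2}$ factor whereas the paper works directly with $\|z\|_\infty\ge\max\{\chi_1,\chi_2\}$ before converting via the quadratic bounds $\lambda_{\min}(Q)\|z\|_2^2\le V_5\le\lambda_{\max}(Q)\|z\|_2^2$.
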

\begin{proof}
%The input $u(t)$ is a piece-wise continuous, bounded function of $t$ for all $t \geq 0 $.
In Lemma~\ref{Lemma:ClassBLyap}, it is shown that $\dot{z} = g(z,0)$ is globally asymptotically stable. It can be shown that the Lyapunov function $V_5(z)$ satisfies the following inequalities: $\lambda _{\min } \left( Q \right)\left\| z \right\|_2^2  \le V_5 \left( z \right) \le \lambda _{\max } \left( Q \right)\left\| z \right\|_2^2$, which implies $\alpha _1 \left( {\left\| z \right\|} \right) \le V_5 \left( z \right) \le \alpha _2 \left( {\left\| z \right\|} \right)$, where $\alpha_1$ and $\alpha_2$ are class $\mathcal{K_\infty}$ functions and
$Q = 
\begin{bmatrix}
H_{g} & 0\\
0 & \frac{\tau_{g}}{2d_{pg}}
\end{bmatrix}$.
For $0<\theta<1$, we can write:
\begin{multline*}
\dot{V_5} =  - \frac{\underline{P}_{\tau g}^{2}}{d_{pg}}-\underline{\omega}_{g} sat(d_{pc}\underline{\omega}_{g}, \underline P _c^{\max }) + w\underline{\omega}_{g}
%\\ \leq - \frac{\underline{P}_{\tau g}^{2}}{d_{pg}}-\underline{\omega}_{g} sat(d_{pc}\underline{\omega}_{g},\pm P_{max}) + \left|u\right|\left|\underline{\omega}_{g}\right|
\\ \leq -(1- \theta)\left(\frac{\underline{P}_{\tau g}^{2}}{d_{pg}}+\underline{\omega}_{g} sat(d_{pc}\underline{\omega}_{g}, \underline P _c^{\max })\right)
\\ -\theta\left(\frac{\underline{P}_{\tau g}^{2}}{d_{pg}}+\underline{\omega}_{g} sat(d_{pc}\underline{\omega}_{g}, \underline P _c^{\max })\right)+ \left|w\right|\left|\underline{\omega}_{g}\right|
\end{multline*}
Let us define, $W = (1- \theta)\left(\frac{\underline{P}_{\tau g}^{2}}{d_{pg}}+\underline{\omega}_{g} sat(d_{pc}\underline{\omega}_{g}, \underline P _c^{\max })\right)$, which is a positive definite function in $\R^{2}$. Now, define $\Gamma = -\theta(\frac{\underline{P}_{\tau g}^{2}}{d_{pg}}+\underline{\omega}_{g} sat(d_{pc}\underline{\omega}_{g}, \underline P _c^{\max }))+ \left|w\right|\left|\underline{\omega}_{g}\right|$. The term $\Gamma$ will be $\leq$ 0 if $\left|\underline{\omega}_{g}\right| \geq \frac{{\underline P _c^{\max } }}{{d_{pc} }}\tanh ^{ - 1} \left( {\frac{{|w|}}{{\theta \underline P _c^{\max } }}} \right) = \chi _1 \left( {\left| w \right|} \right)$ or $\left|\underline{\omega}_{g}\right| \leq \chi _1 \left( {\left| w \right|} \right)$ and $\left|\underline P_{\tau g}\right| \geq \left[ {\frac{|w|d_{pg}}{{\theta }} \chi _1 \left( {\left| w \right|} \right)} \right]^{\frac{1}{2}} = \chi _2 \left( {\left| w \right|} \right)$. This condition implies $\left\| z \right\|_\infty   \geq \max \left\{ {\chi _1 \left( {\left| w \right|} \right),\chi _2 \left( {\left| w \right|} \right)} \right\} = \rho \left( {\left| w \right|} \right)$. So,
\[
\dot{V_5} \leq -W,~~ \forall \norm{z}_{\infty} \geq \rho (|w|)
\]
Here, $\rho(|w|)$ is a class $\mathcal{K}$ function with $w\in (-\theta \underline P _c^{\max } ,~\theta \underline P _c^{\max } )$. Since, $\underline P _c^{\max }\in \mathbb{R}_{>0}$, we contend that the above holds $\forall (z, w) \in \mathbb{R}^2\times\mathbb{R}$. Therefore, we have satisfied all conditions of input-to-state stability per Theorem $4.19$ in \cite{khalil}. 

Now, we need to define class  $\mathcal{K}$ function $\gamma = \alpha_{1}^{-1} \circ \alpha_{2} \circ \rho$. It can be shown that $\gamma \left( {\left| w \right|} \right)  = \sqrt {\frac{{\lambda _{\max } \left( Q \right)}}{{\lambda _{\min } \left( Q \right)}}} \rho \left( {\left| w \right|} \right) = c\rho \left( {\left| w \right|} \right)$.

% \hspace*{\fill} $\square$
\end{proof}

% \subsection{Stability Analysis for a Generic System}\label{sec:StabGeneric}

Next, we extend these proofs for a generic system with $m_1$ SGs, $n_1$ class-B GFCs, and $p_1$ load buses. We assume that the center-of-inertia (COI) of this system is representative of its average frequency dynamics and the corresponding frequency $\omega_{COI} \approx \omega_{ci}, ~\forall i=1,2,\dots,n_1$. Following the same approach as in Section~\ref{sec:classB}, we can present the reduced-order model of this system:
\begin{equation}\label{eq:ClassB_COI}
\begin{array}{l}
 \underline {\dot \omega } _{COI}  = \frac{1}{2{H_T }}\left[ {\underline P _{\tau gT}  - \sum\limits_{i = 1}^{n_1 } {sat\left( {d_{pci} \underline \omega  _{COI} ,  \underline P _{ci}^{\max } } \right) - \underline P _{LT} } } \right] \\ 
 \underline {\dot P} _{\tau gT}  = \frac{1}{{\tau _{gT} }}\left[ { - \underline P _{\tau gT}  - d_{pgT} \underline \omega  _{COI} } \right] \\ 
 \end{array}
\end{equation}
Here, $H_T  = \sum\limits_{i = 1}^{m_1 } {H_{gi} },~ P_{\tau gT}  = \sum\limits_{i = 1}^{m_1 } {P_{\tau gi} },~
P_{gT}^*  = \sum\limits_{i = 1}^{m_1 } {P_{gi}^* },~ d_{pgT}  = \sum\limits_{i = 1}^{m_1 } {d_{pgi} },~ \tau _{gi}  = \tau _{gT} \forall i,~P_{LT}  = \sum\limits_{i = 1}^{p_1 } {P_{Li} },~ \underline \omega  _{COI}  = \omega _{COI}  - \omega ^* ,~\underline P _{\tau gT}  = P_{\tau gT}  - P_{gT}^* ,~\underline P _{LT}  = P_{LT}  - P_{LT}^* $. Assuming $z_1 = \left[ {\underline \omega  _{COI} ~~\underline P _{\tau gT} } \right]^T ,~~w_1 =  - \underline P _{LT}$, \eqref{eq:ClassB_COI} can be expressed as $\dot{z_1} = g_1(z_1,w_1), ~g_1: \R^{2}\times\R \rightarrow \R^{2}$, where $g_1$ is locally Lipschitz in $(z_1, w_1)$, and $g_1(0, 0) = 0$. We present the following lemma to analyze Lyapunov stability of this system.
\begin{lemma}
For class-B GFCs, the equilibrium $z_1 = 0$ is globally asymptotically stable $\forall \; d_{pgT}, d_{pci}$ $>0$,$\forall i$.
\end{lemma}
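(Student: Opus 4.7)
The plan is to extend the Lyapunov argument of Lemma~\ref{Lemma:ClassBLyap} almost verbatim, replacing the per-unit quantities with their center-of-inertia aggregates. Specifically, for the unforced system ($w_1 = -\underline P_{LT} = 0$), I would propose the candidate
\[
V_6(z_1) = H_T \underline \omega_{COI}^2 + \frac{\tau_{gT}}{2 d_{pgT}} \underline P_{\tau gT}^2,
\]
which is positive definite, quadratic, and radially unbounded on $\mathbb{R}^2$ for any $d_{pgT}>0$. This mirrors the structure of $V_5$ and builds on the aggregate parameters defined after \eqref{eq:ClassB_COI}.

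Next I would compute $\dot V_6$ along trajectories of \eqref{eq:ClassB_COI}. Substituting the two equations and regrouping, I expect the cross terms involving $\underline\omega_{COI}\underline P_{\tau gT}$ to cancel exactly, just as in the proof of Lemma~\ref{Lemma:ClassBLyap}, leaving
\[
\dot V_6 = -\frac{\underline P_{\tau gT}^2}{d_{pgT}} \;-\; \underline \omega_{COI} \sum_{i=1}^{n_1} \mathrm{sat}\!\left(d_{pci}\,\underline \omega_{COI},\; \underline P_{ci}^{\max}\right).
\]
The first term is clearly non-positive and vanishes only if $\underline P_{\tau gT}=0$; the second term requires a brief sign argument.

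For that sign argument, I would note that for each $i$ the scalar product $\underline \omega_{COI}\cdot \mathrm{sat}(d_{pci}\underline \omega_{COI},\underline P_{ci}^{\max})$ is strictly positive whenever $\underline \omega_{COI}\neq 0$ and $d_{pci}>0$, because the saturation function is monotone and passes through the origin (as invoked already in Lemma~\ref{Lemma:ClassBLyap}). Summing over $i=1,\dots,n_1$ preserves this, so the full second term is $\leq 0$ and $=0$ only when $\underline \omega_{COI}=0$. Therefore $\dot V_6$ is negative definite on $\mathbb{R}^2$, and since $V_6$ is radially unbounded, global asymptotic stability of $z_1=0$ follows from the standard Lyapunov theorem (Theorem~4.2 in \cite{khalil}).

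I do not anticipate a substantive obstacle: the cross-term cancellation is structurally identical to the single-GFC case, and the aggregation through the COI assumption $\omega_{COI}\approx \omega_{ci}$ (stated before \eqref{eq:ClassB_COI}) makes the sum of saturation functions a single scalar function of $\underline\omega_{COI}$, so multi-converter coupling introduces no new sign issue. The only minor point to be careful about is to invoke the hypothesis $d_{pci}>0$ for every $i$ when asserting positivity of each summand, and $d_{pgT}>0$ when defining $V_6$.
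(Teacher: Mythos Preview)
Your proposal is correct and follows essentially the same approach as the paper: the paper also uses $V_6 = H_T \underline{\omega}_{COI}^2 + \frac{\tau_{gT}}{2d_{pgT}}\underline{P}_{\tau gT}^2$ and concludes by observing that $\sum_{i=1}^{n_1} \underline{\omega}_{COI}\,\mathrm{sat}(d_{pci}\underline{\omega}_{COI},\underline P_{ci}^{\max})>0$ for $\underline{\omega}_{COI}\neq 0$ and $d_{pci}>0$. Your write-up is in fact more explicit about the cross-term cancellation and the invocation of the global Lyapunov theorem than the paper's own proof, which simply calls it a ``simple extension'' of Lemma~\ref{Lemma:ClassBLyap}.
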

\begin{proof}
It is a simple extension of Lemma~\ref{Lemma:ClassBLyap}. We use the Lyapunov function $V_6 = H_{T}\underline{\omega}_{COI}^{2} + \frac{\tau_{gT}}{2d_{pgT}}\underline{P}_{\tau gT}^{2}$ with $d_{pgT}>0$ and notice that $\sum\limits_{i = 1}^{n_1 }{\underline{\omega}_{COI} sat(d_{pci}\underline{\omega}_{COI}, \underline P _{ci}^{\max })} > 0 ~~ \forall \; \underline{\omega}_{COI} \in \R - \left\{ {0} \right\}, d_{pci} > 0~\forall i $, which proves the Lemma.
% \hspace*{\fill} $\square$
\end{proof}
Next, we present a corollary relating the input-to-state stability of this system.
\begin{corollary}
Theorem \ref{Th:InputStateClassB} can be extended for establishing the input-to-state stability of \eqref{eq:ClassB_COI} with the following modifications: (1) $\underline P _{c}^{\max }$ and $d_{pc}$ correspond to the minimum value of  ${\underline P _{ci}^{\max } \tanh \left( {\left| {\frac{{d_{pci} }}{{\underline P _{ci}^{\max } }}\underline \omega  _{COI} } \right|} \right)}, \forall i = 1:n_1$, (2) $\chi _1 \left( {\left| w_1 \right|} \right) = \frac{{\underline P _c^{\max } }}{{d_{pc} }}\tanh ^{ - 1} \left( {\frac{{|w_1|}}{{\theta n_1 \underline P _c^{\max } }}} \right) $ and $\chi _2 \left( {\left| w_1 \right|} \right) = \left[ {\frac{|w_1|d_{pgT}}{{\theta }} \chi _1 \left( {\left| w_1 \right|} \right)} \right]^{\frac{1}{2}}$, $~\forall w_1\in (-\theta n_1 \underline P _c^{\max } ,~\theta n_1 \underline P _c^{\max } )$.
\end{corollary}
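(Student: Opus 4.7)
The plan is to mimic the single-GFC argument of Theorem~\ref{Th:InputStateClassB} using the Lyapunov function $V_6 = H_T \underline\omega_{COI}^2 + \frac{\tau_{gT}}{2 d_{pgT}}\underline P_{\tau gT}^2$ from the preceding lemma. By construction $V_6$ is sandwiched between class-$\mathcal{K_\infty}$ functions $\alpha_1(\|z_1\|) \leq V_6 \leq \alpha_2(\|z_1\|)$ via the matrix $Q_T = \mathrm{diag}(H_T,\tau_{gT}/(2 d_{pgT}))$, so the final $\alpha_1^{-1}\circ\alpha_2$ step produces the same scalar $c = \sqrt{\lambda_{\max}(Q_T)/\lambda_{\min}(Q_T)}$ as in Theorem~\ref{Th:InputStateClassB}.

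First I would differentiate $V_6$ along \eqref{eq:ClassB_COI} driven by $w_1$; the cross-term involving $\underline\omega_{COI}\underline P_{\tau gT}$ cancels exactly as in Lemma~\ref{Lemma:ClassBLyap}, yielding
\begin{equation*}
\dot V_6 = -\frac{\underline P_{\tau gT}^2}{d_{pgT}} - \underline\omega_{COI}\sum_{i=1}^{n_1} sat\bigl(d_{pci}\underline\omega_{COI},\underline P_{ci}^{\max}\bigr) + w_1\underline\omega_{COI}.
\end{equation*}
Next I would split the two negative terms with the same convex-combination parameter $\theta\in(0,1)$ as in Theorem~\ref{Th:InputStateClassB}, setting $W = (1-\theta)\bigl[\underline P_{\tau gT}^2/d_{pgT} + \underline\omega_{COI}\sum_i sat(\cdot)\bigr]$ as the positive-definite decay rate and $\Gamma = -\theta\bigl[\cdots\bigr] + |w_1||\underline\omega_{COI}|$ as the disturbance residue whose sign must be controlled.

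The key new ingredient is turning the sum of saturations into a scalar expression so that the single-GFC threshold argument carries over. Using the pointwise inequality $x\cdot sat(kx,M)\geq M|x|\tanh(k|x|/M)$ valid for all $k,M>0$ and $x\in\R$, each channel satisfies $\underline\omega_{COI}\,sat(d_{pci}\underline\omega_{COI},\underline P_{ci}^{\max})\geq \underline P_{ci}^{\max}|\underline\omega_{COI}|\tanh(d_{pci}|\underline\omega_{COI}|/\underline P_{ci}^{\max})$. The corollary's definition of $\underline P_c^{\max}$ and $d_{pc}$ as the minimizers over $i$ of this tanh expression then permits the uniform lower bound $\sum_{i=1}^{n_1}\underline\omega_{COI}\,sat(\cdot)\geq n_1\underline P_c^{\max}|\underline\omega_{COI}|\tanh(d_{pc}|\underline\omega_{COI}|/\underline P_c^{\max})$, which is precisely what injects the factor $n_1$ inside $\tanh^{-1}$. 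Replaying the two-case analysis of $\Gamma\leq 0$ from Theorem~\ref{Th:InputStateClassB} then produces $\chi_1(|w_1|) = \frac{\underline P_c^{\max}}{d_{pc}}\tanh^{-1}\bigl(|w_1|/(\theta n_1\underline P_c^{\max})\bigr)$ and $\chi_2(|w_1|) = [|w_1|d_{pgT}\chi_1(|w_1|)/\theta]^{1/2}$, so that $\dot V_6\leq -W$ whenever $\|z_1\|_\infty\geq \rho(|w_1|) = \max\{\chi_1,\chi_2\}$, and Khalil Theorem~4.19 delivers ISS with $\gamma = c\rho$.

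The main obstacle I anticipate is the joint minimization defining $\underline P_c^{\max}$ and $d_{pc}$: since the minimizing index can depend on $\underline\omega_{COI}$, one must argue that the resulting bound remains a legitimate class-$\mathcal{K}$ comparison function of $|\underline\omega_{COI}|$ alone. The cleanest way to handle this is to fix the interpretation as a state-dependent pointwise minimum and observe that the tanh lower bound is still monotone in $|\underline\omega_{COI}|$; as in Theorem~\ref{Th:InputStateClassB} the a-priori restriction $w_1\in(-\theta n_1\underline P_c^{\max},\theta n_1\underline P_c^{\max})$ can be removed by invoking $\underline P_{ci}^{\max}\in\mathbb{R}_{>0}$ to extend the ISS conclusion to all $(z_1,w_1)\in\R^2\times\R$.
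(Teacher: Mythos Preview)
Your proposal is correct and follows exactly the approach the paper intends: the paper's own proof is a single sentence stating that with $V_6$ as Lyapunov function one ``can be easily proved following same steps as in Theorem~\ref{Th:InputStateClassB},'' and your write-up is precisely that replay, made explicit. In particular, your use of the pointwise bound $x\,sat(kx,M)\ge M|x|\tanh(k|x|/M)$ to collapse the sum over $i$ into $n_1$ copies of the minimizing channel is the mechanism implicitly required to recover the stated $\chi_1$ with the extra factor $n_1$ inside $\tanh^{-1}$; the paper does not spell this out, so your version is in fact more complete than the published proof while remaining on the same track.
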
 
\begin{proof}
Assuming $V_6$ as the Lyapunov function, this can be easily proved following same steps as in Theorem~\ref{Th:InputStateClassB}.
% \hspace*{\fill} $\square$
\end{proof}

\subsection{Remarks on Assumptions}\label{sec:Remarks}
1. \textit{Network model:} It was shown in \cite{hug_2019_low_inertia} that network dynamics introduces both positive and negative effects on stability in systems with GFCs and SGs. Although algebraic representation of the network gives a conservative stability estimate, the inclusion of network dynamics imposes strict upper bounds on droop feedback gains for ensuring voltage and frequency stability. 

2. \textit{AC current limits:} The ac current limits are used to constrain GFC current during faults. We point out that following a fault, typically the reactive component of current increases significantly \cite{blaabjerg_2019_fault_reactive_current} compared to the real component. As a result, this might not lead to dc-side current saturation. On the other hand, the problem of generation loss leads to increase in real power output and hence dc-side current limit is reached first. Assuming that the available headroom $(i_{dc}^{max} - i_{dc})$ is not very large (which is typical), the ac current limit might not be hit in this condition. If, however this is not the case, then ac-side constraints need to be taken into account in stability analysis, which is outside the scope of the present paper.

3. \textit{Frequency of class-B GFCs:} For class-B GFCs, the working assumption in the $2$-bus test system is $\omega _c  \approx \omega _g$, whereas in the multimachine system, we assume $\omega_{COI} \approx \omega_{ci}, ~\forall i=1,2,\dots,n_1$. In reality, this may not be true. Also, such models cannot capture the oscillatory electromechanical dynamics present in practical multimachine systems that reflects the angle stability issues. 

4. \textit{DC voltage filtering in class-B GFC:} In reality, switching ripple in dc-link voltage can propagate to angle reference through $\omega_c$ of class-B GFCs. If a low pass filter is used to mitigate this issue, it needs to be considered in the stability analysis.
 
\vspace{5pt}
\section{Results \& Discussions}
\vspace{5pt}
For validating the proposed lemmas and theorems, we consider the test system shown in Fig. \ref{fig_test_systeml}. To that end, the averaged models shown in Fig. \ref{fig:block_diagram} are built in Matlab Simulink and a detailed switched model of a standalone GFC connected to a constant power load $P_{LC}$ is developed in EMTDC/PSCAD including the control loops shown in Fig.\ref{fig_control_block}.
\begin{figure}[ht]
\vspace{-5pt}
\includegraphics[width=\linewidth]{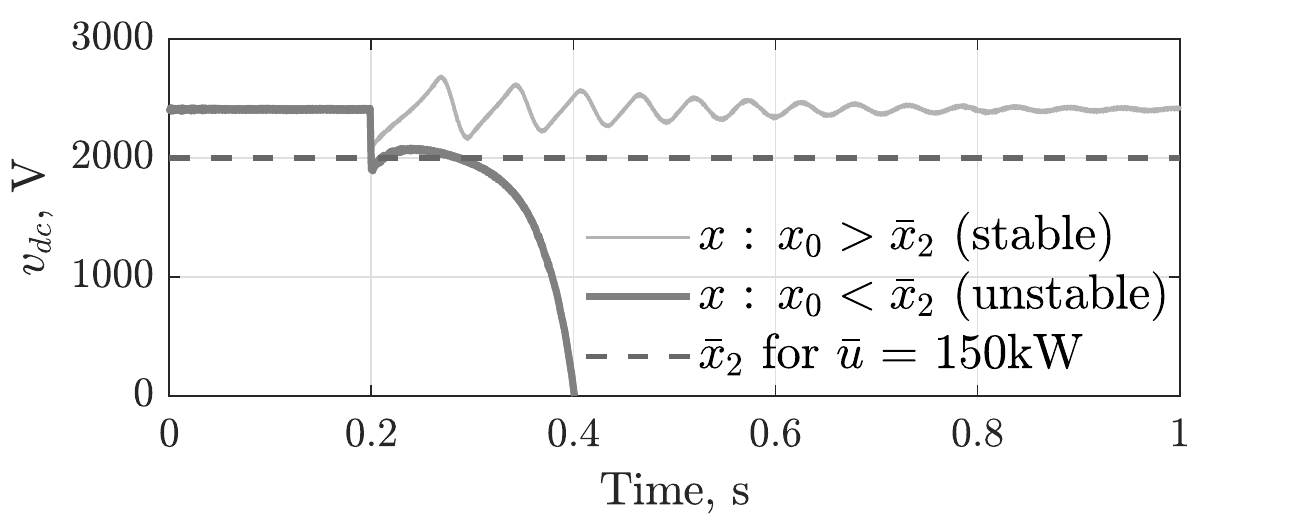}
\caption{Class-A GFC: unforced response from PSCAD model.}
\label{fig_class_a_roa}
\vspace{-5pt}
\end{figure}
\par We validate the ROAs for both classes of GFCs using the PSCAD model by switching the dc bus capacitor voltage to a value $x_0$ at $t = 0.2$ s while operating at equilibrium $(\bar{x}_1, \bar u)$. In Fig. \ref{fig_class_a_roa}, it is shown that for class-A GFC, $v_{dc}$ collapses if $x_0<\bar{x}_2$, whereas it is stable if $x_0>\bar{x}_2$ by slight margin, which validates the ROA defined in Theorem \ref{Th:ClassALyap}. Figure \ref{fig_class_b_roa} shows that $v_{dc}$ returns back to $\bar{x}_1$ even if it is switched below $\bar{x}_2$ for class-B GFC.

\begin{figure}[ht]
\vspace{-0pt}
\includegraphics[width=\linewidth]{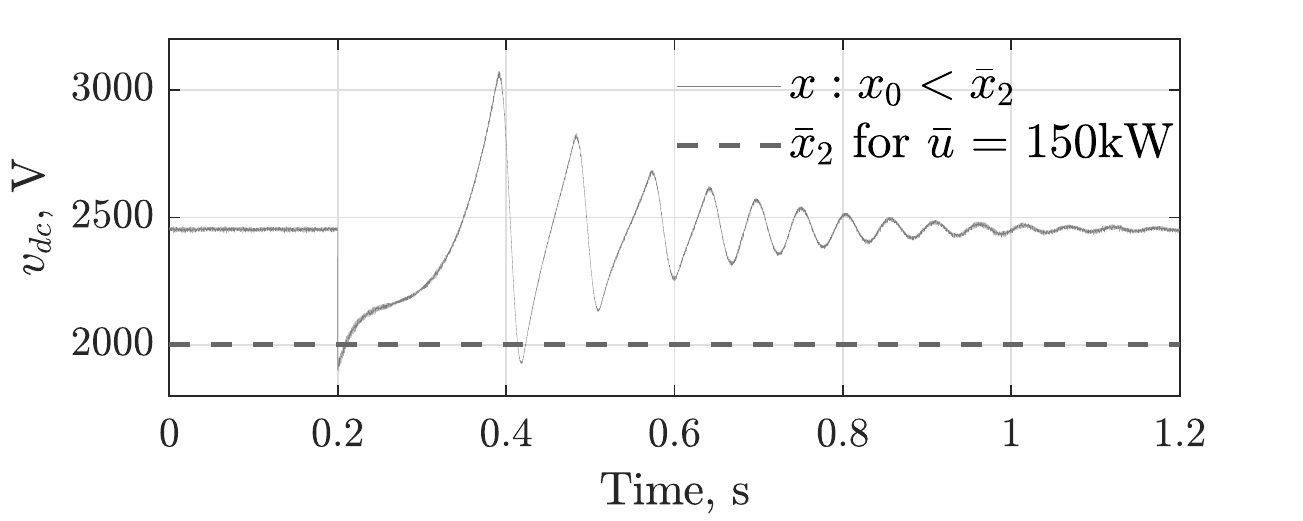}
\vspace{-15pt}
\caption{Class-B GFC: unforced response from PSCAD model.}
\vspace{-7pt}
\label{fig_class_b_roa}
\end{figure}

Next, we validate the ROAs using the average models that capture the dynamics of both SG and GFC. Figure \ref{fig_class_a_roa_avg_model} shows the unforced response of these models by initializing $v_{dc}$ at different values while operating at equilibrium $(\bar{x}_1, \bar u)$. Here, the class-B GFC is stable even when the initial voltage state is significantly lower than $\bar{x}_{2}$.

\begin{figure}[ht]
\vspace{-5pt}
\includegraphics[width=\linewidth]{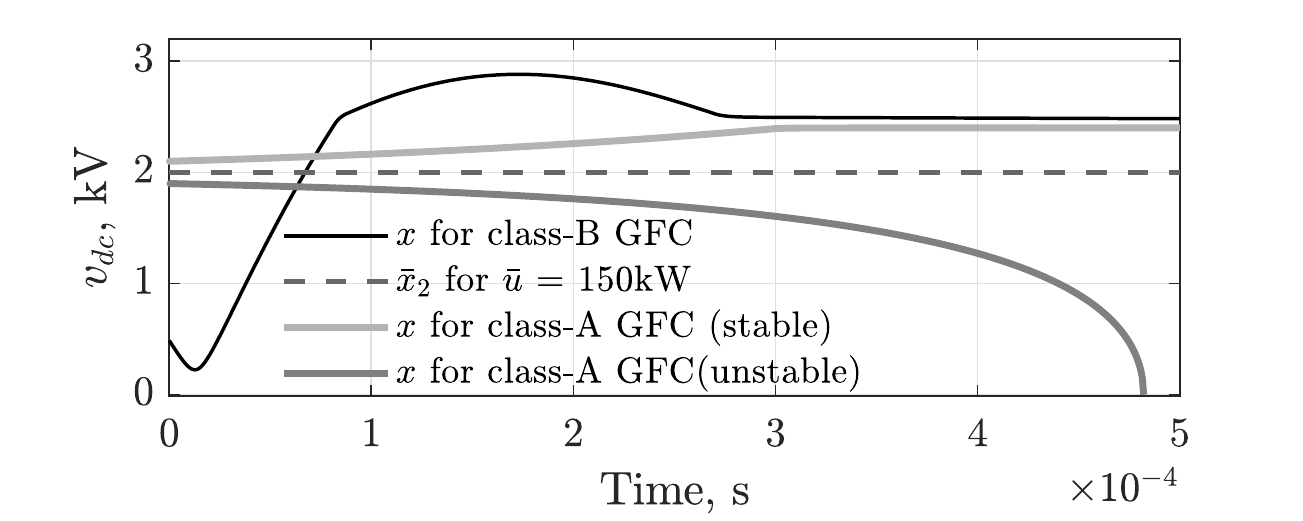}
\caption{Unforced response from averaged models.}
\label{fig_class_a_roa_avg_model}
\vspace{-5pt}
\end{figure}
\par

To validate Theorem \ref{Th:ClassAInOut}, a small step change is given in the load from $\bar{u} = 175$ kW to $\bar{u}_{m} = 177$ kW in PSCAD model of class-A GFC (see, Fig. \ref{fig_class_a}(a)). It can be seen from Fig. \ref{fig_class_a}(b), that the dc voltage is stable. In Fig. \ref{fig_class_a} (c,d), it is shown that when $u = \bar{u}_m$, the unforced response becomes unstable when the initial value of $x$ is less than $x_m$, which proves Theorem \ref{Th:ClassAInstab}.
\begin{figure}[ht]
\vspace{-5pt}

\includegraphics[width=\linewidth]{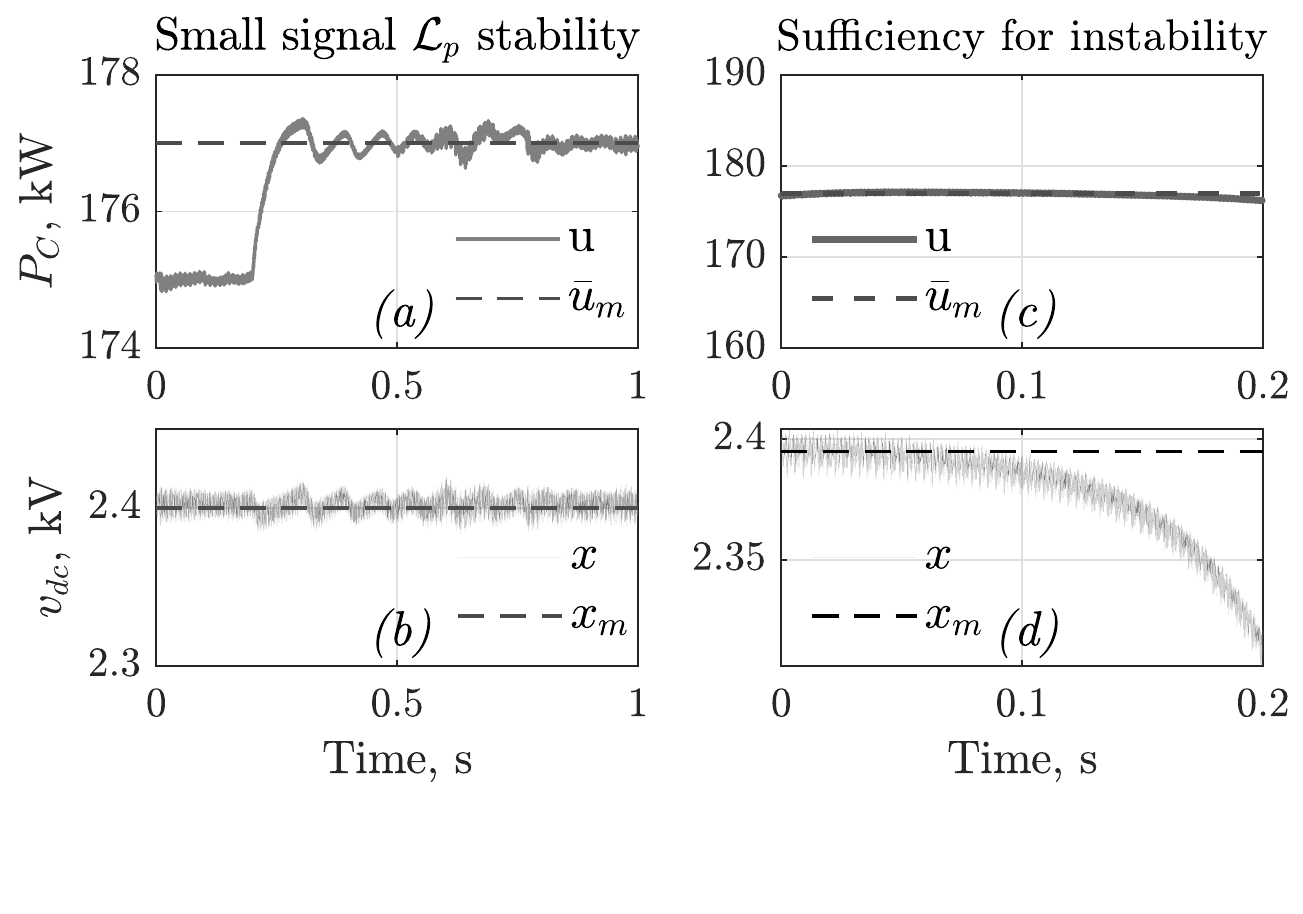}
\vspace{-30pt}
\caption{Class-A GFC PSCAD model: (a),(b): Forced response; (c),(d): Unforced response.}
\vspace{-0pt}

\label{fig_class_a}
\end{figure}
\par
For class-B GFC, a large step change is given in the load from $\bar u = 165$ kW to $\bar u_m = 177$ kW -- Fig. \ref{fig_class_b} confirms the input to state stability per Theorem \ref{Th:InputStateClassB}.

\begin{figure}[ht]
\vspace{-5pt}
\includegraphics[width=\linewidth]{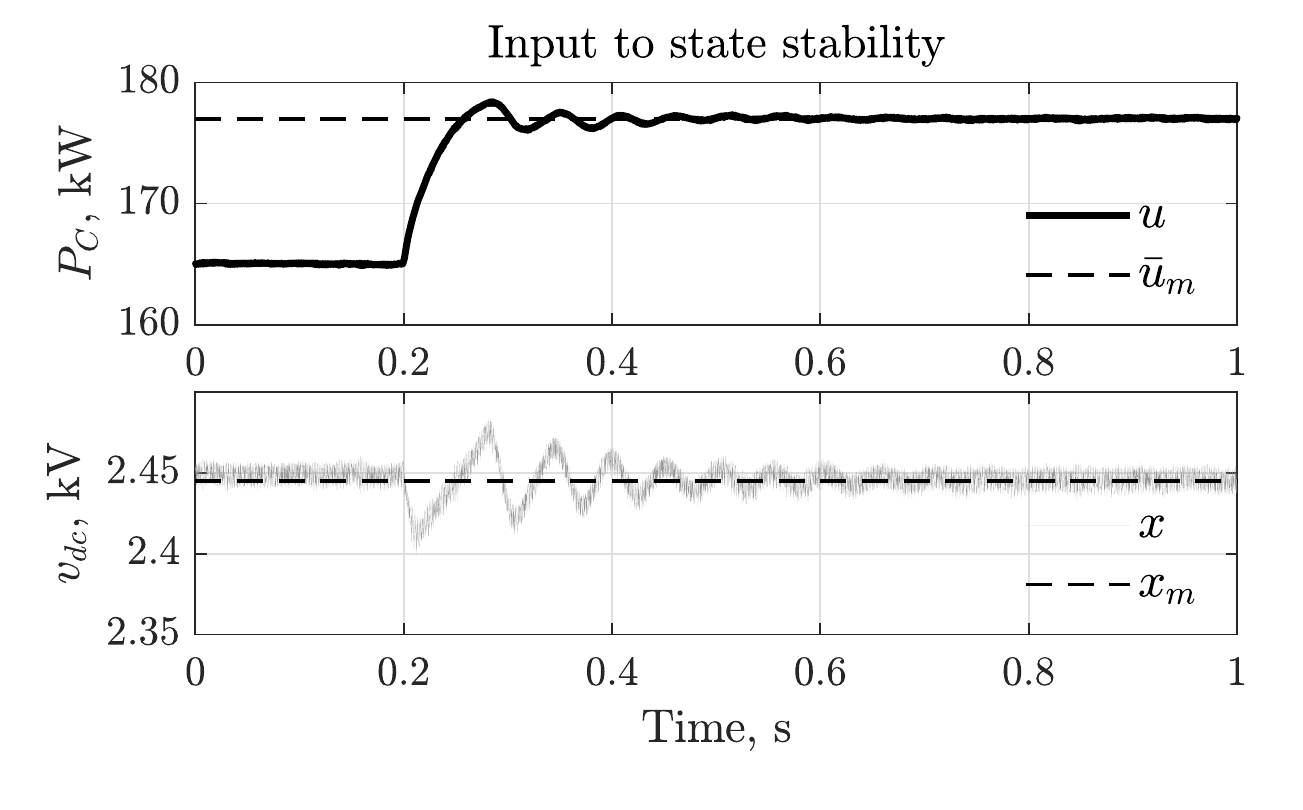}
\vspace{-20pt}
\caption{Power and DC voltage plot for class-B GFC.}
\vspace{-10pt}
\label{fig_class_b}
\end{figure}

\vspace{0pt}
\section{Conclusion}

Stability analysis of power systems consisting of SGs and GFCs with dc-side current limitation showed that the ROA of class-A GFC is a subset of its class-B counterpart. We established the conditions for small-signal finite gain $\mathcal{L}_{p}$ stability of class-A GFC and input-to-state state stability of class-B GFC following a bounded variation in the load of the system, and validated the results through simulation studies.
% \vspace{-10 pt}
\section*{Appendix}
% Model Parameters ($MVA_{b}$ = 0.1 MVA)
\vspace{ 0 pt}
% \begin{small}
\setlength{\tabcolsep}{6pt} % Default value: 6pt
\renewcommand{\arraystretch}{1.25}
\begin{center}
\begin{tabular}[ht]{||c c c c c c||}
\hline
\multicolumn{1}{|c|}{$k_c$}     & \multicolumn{1}{c||}{1.6e3 $\mho$}  & \multicolumn{1}{c|}{$v_{dc}^*$}   & \multicolumn{1}{c||}{2.44 kV} & \multicolumn{1}{c|}{$C_c$}   & \multicolumn{1}{c|}{8 mF}   \\ \hline
\multicolumn{1}{|c|}{$G_c$}     & \multicolumn{1}{c||}{0.83 $\mho$}   & \multicolumn{1}{c|}{$P_c^*$}    & \multicolumn{1}{c||}{150 kW}  & \multicolumn{1}{c|}{$P_g^*$}  & \multicolumn{1}{c|}{150 kW} \\ \hline
\multicolumn{1}{|c|}{b}      & \multicolumn{1}{c||}{5e3 p.u.}  & \multicolumn{1}{c|}{$H_g$}     & \multicolumn{1}{c||}{3.7 s}    & \multicolumn{1}{c|}{$\tau_g$} & \multicolumn{1}{c|}{5 s}     \\ \hline
\multicolumn{1}{|c|}{$k_{m}$} & \multicolumn{1}{c||}{128.75 $V^{-1}$} & \multicolumn{1}{c|}{$\omega^*$} & \multicolumn{1}{c||}{314.16 rad/s} & \multicolumn{1}{c|}{$d_{pg}$}  & \multicolumn{1}{c|}{7 p.u.}     \\ \hline
\multicolumn{1}{|c|}{$d_{pc}$}    & \multicolumn{1}{c||}{1e-3 p.u.}   & \multicolumn{1}{c|}{$i_{dc}^{max}$}       & \multicolumn{1}{c||}{75 A}       & \multicolumn{1}{c|}{$P_{c}^{max}$}     & \multicolumn{1}{c|}{178 kW}      \\ \hline
\multicolumn{1}{l}{}         & \multicolumn{1}{l}{}        & \multicolumn{1}{l}{}        & \multicolumn{1}{l}{}        & \multicolumn{1}{l}{}      & \multicolumn{1}{l}{}       \\ 
\end{tabular}
\end{center}
% \end{small}
\vspace{-0 pt}
\renewcommand*{\bibfont}{\footnotesize}

\printbibliography

@book{khalil,
      author        = "Khalil, Hassan K",
      title         = "{Nonlinear systems; 3rd ed.}",
      publisher     = "Prentice-Hall",
      address       = "",
      year          = "2002",
      url           = "",
      note          = "",
}

@book{yazdani,
  title={Voltage-Sourced Converters in Power Systems: Modeling, Control, and Applications},
  author={Yazdani, A. and Iravani, R.},
  series={Wiley - IEEE},
  year={2010},
  publisher={Wiley}
}

@INPROCEEDINGS{drofler_cdc_2017,
  author={S. {Curi} and D. {Groß} and F. {Dörfler}},
  booktitle={2017 IEEE 56th Annual Conference on Decision and Control (CDC)}, 
  title={Control of low-inertia power grids: A model reduction approach}, 
  year={2017},
  volume={},
  number={},
  pages={5708-5713},}

@ARTICLE{drofler_journal_2020,
  author={A. {Tayyebi} and D. {Groß} and A. {Anta} and F. {Kupzog} and F. {Dörfler}},
  journal={IEEE Journal of Emerging and Selected Topics in Power Electronics}, 
  title={Frequency Stability of Synchronous Machines and Grid-Forming Power Converters}, 
  year={2020},
  volume={8},
  number={2},
  pages={1004-1018},}

@INPROCEEDINGS{drofler_2018_foundation_and_challenges,
  author={F. {Milano} and F. {Dörfler} and G. {Hug} and D. J. {Hill} and G. {Verbič}},
  booktitle={2018 Power Systems Computation Conference (PSCC)}, 
  title={Foundations and Challenges of Low-Inertia Systems (Invited Paper)}, 
  year={2018},
  volume={},
  number={},
  pages={1-25},}

@unknown{hug_2019_low_inertia,
author = {Markovic, Uros and Stanojev, Ognjen and Vrettos, Evangelos and Aristidou, Petros and Hug, Gabriela},
year = {2019},
month = {02},
pages = {},
title = {Understanding Stability of Low-Inertia Systems},
doi = {10.31224/osf.io/jwzrq}
}

@ARTICLE{divan_droop_1993,
  author={M. C. {Chandorkar} and D. M. {Divan} and R. {Adapa}},
  journal={IEEE Transactions on Industry Applications}, 
  title={Control of parallel connected inverters in standalone AC supply systems}, 
  year={1993},
  volume={29},
  number={1},
  pages={136-143},}

@article{drofler_matching_control_2018,
title = "Grid-forming control for power converters based on matching of synchronous machines",
journal = "Automatica",
volume = "95",
pages = "273 - 282",
year = "2018",
issn = "0005-1098",
doi = "",
url = "",
author = "Catalin Arghir and Taouba Jouini and Florian Dörfler"
}

@ARTICLE{zhong_VSM,
  author={Q. {Zhong} and G. {Weiss}},
  journal={IEEE Transactions on Industrial Electronics}, 
  title={Synchronverters: Inverters That Mimic Synchronous Generators}, 
  year={2011},
  volume={58},
  number={4},
  pages={1259-1267},}

@INPROCEEDINGS{drofler_dvoc,
  author={G. {Seo} and M. {Colombino} and I. {Subotic} and B. {Johnson} and D. {Groß} and F. {Dörfler}},
  booktitle={2019 IEEE Applied Power Electronics Conference and Exposition (APEC)}, 
  title={Dispatchable Virtual Oscillator Control for Decentralized Inverter-dominated Power Systems: Analysis and Experiments}, 
  year={2019},
  volume={},
  number={},
  pages={561-566},}

@article{nrel_1_2015,
  author={N. W. {Miller} and M. {Shao} and S. {Pajic} and R. {D’Aquila}},
title = {Western wind and
solar integration study phase 3-frequency response and transient stability},
doi = {},
journal = {National Renewable Energy Lab.(NREL), Golden, CO (United
States); GE Energy Management, Schenectady, NY (United States),
Tech. Rep.},
number = {},
volume = {},
place = {United States},
year = {2014},
month = {}
}

@article{eirgrid_2012,
title = {DS3: System Services Review TSO Recommendations},
author = {EirGrid and Soni},
abstractNote = {},
doi = {},
journal = {EirGrid, Tech. Rep},
number = {},
volume = {},
place = {},
year = {2012},
month = {}
}

@misc{tayyebi2020hybrid,
      title={Hybrid Angle Control and Almost Global Stability of Grid-Forming Power Converters}, 
      author={Ali Tayyebi and Adolfo Anta and Florian Dörfler},
      year={2020},
      eprint={2008.07661},
      archivePrefix={arXiv},
      primaryClass={math.OC}
}

@misc{gao2020gridforming,
      title={Grid-Forming Converters control based on DC voltage feedback}, 
      author={Yuan Gao and Hai-Peng Ren and Jie Li},
      year={2020},
      eprint={2009.05759},
      archivePrefix={arXiv},
      primaryClass={eess.SY}
}

@ARTICLE{blaabjerg_2019_fault_reactive_current,
  author={M. G. {Taul} and X. {Wang} and P. {Davari} and F. {Blaabjerg}},
  journal={IEEE Journal of Emerging and Selected Topics in Power Electronics}, 
  title={Current Limiting Control With Enhanced Dynamics of Grid-Forming Converters During Fault Conditions}, 
  year={2020},
  volume={8},
  number={2},
  pages={1062-1073},
  doi={}}

\end{document}